\documentclass[conference]{IEEEtran}
\IEEEoverridecommandlockouts

\usepackage{amsmath,amssymb,amsfonts}
\usepackage{amsthm}
\usepackage{bm}
\usepackage{cite}
\usepackage{graphicx}
\usepackage{mathtools}
\usepackage{url}
\usepackage{pgfplots}
\usepackage[hidelinks]{hyperref}
\pgfplotsset{compat=1.17}
\definecolor{figblue}{RGB}{37,99,235}
\definecolor{figpurple}{RGB}{124,58,237}
\definecolor{figred}{RGB}{225,29,72}

\makeatletter
\def\bstctlcite#1{\@bsphack
  \@for\@citeb:=#1\do{%
    \edef\@citeb{\expandafter\@firstofone\@citeb}%
    \if@filesw\immediate\write\@auxout{\string\citation{\@citeb}}\fi}%
  \@esphack}
\makeatother

\newtheorem{theorem}{Theorem}
\newtheorem{proposition}{Proposition}
\newtheorem{corollary}{Corollary}
\newtheorem{lemma}{Lemma}
\newtheorem{remark}{Remark}

\title{\LARGE \bf
Trajectory-Induced Self-Calibration for Hidden-Target Localization Through an Unknown-Pose Range-Bearing Relay}

\author{Yash Bagla\\
Michigan State University\\
{\tt\small yashbagla321@gmail.com}%
\thanks{\raggedright Code and data: \protect\url{https://github.com/yashbagla321/trajectory-induced-self-calibration}, archived at \protect\url{https://doi.org/10.5281/zenodo.21872277}.}}

\begin{document}
\bstctlcite{IEEEexample:BSTcontrol}
\maketitle
\thispagestyle{empty}
\pagestyle{empty}
\begin{abstract}
This paper studies hidden-target localization from range-bearing packets reported by a relay beacon whose global position and yaw are unknown.
The vehicle knows its own trajectory but never directly senses the target; the relay packet contains only local-frame range and bearing to the vehicle and to the hidden target.
Unlike bearing-only network localization, relative-frame localization, and target-enclosing control, the target is neither directly observed in the vehicle frame nor treated as a node in a relative-sensing graph.
The main result characterizes the minimal motion that removes the resulting calibration ambiguity: one vehicle pose leaves a continuous yaw/translation/target gauge, whereas two distinct vehicle-relative observations from one unknown-pose relay constructively determine relay yaw (modulo $2\pi$), relay position, and the anchored target in the noiseless case.
A local rank corollary, a shared-target multi-beacon extension, and a trajectory-spread conditioning lemma connect relay self-calibration to finite-window excitation and native range-bearing estimation.
In Monte Carlo evaluation the estimator recovers the hidden target with $5.5$ mm RMSE, five times below the $30$ mm per-packet range noise and thirteen times more accurate than a naive EKF baseline; it converges to the same accuracy from $2$ m target offsets and $2.4$ rad yaw errors, and Huber weighting preserves millimeter accuracy under $10\%$ outlier corruption that drops the unprotected estimator to a $0.10$ success rate.
Trajectory spread predicts estimator quality: the two weakly excited trajectories carry condition numbers above $100$ with success rates of $0.82$ and $0.70$, while every well-excited trajectory attains full success.
\end{abstract}

\section{Introduction}

Adaptive source localization can drive an autonomous vehicle toward an unknown source using geometric measurements, but range-only formulations generally require multiple stationary beacons for a two-dimensional source~\cite{guler2017adaptive}.
This paper asks a different question: can a vehicle localize a hidden target when the only cooperating sensor is a range-bearing relay whose own global pose is not calibrated?
The difficulty has three parts.
First, the target is hidden from the vehicle.
Second, the relay reports the target only in its unknown local frame.
Third, using that relay at a single vehicle pose leaves a continuous yaw/translation/target gauge, so the target packet cannot be anchored globally until the trajectory supplies another distinct view.

Classical cooperative localization and SLAM typically assume that relevant landmarks are directly sensed, globally anchored, or estimated inside a shared map~\cite{roumeliotis2002distributed,durrantwhyte2006slam,cadena2016slam}.
Gauge freedoms and inconsistent linearization are well known in SLAM and network localization~\cite{huang2010observability,barooah2007estimation,eren2004rigidity}.
Planar absolute-orientation methods also recover a rigid transform from corresponding points~\cite{arun1987least,horn1987closed}.
Closer literature addresses bearing-only network localization with anchors and noise bounds~\cite{shames2013bearing}, relative-frame orientation localization in sensing networks~\cite{piovan2013frame}, and distributed target localization with bearing-based enclosing control~\cite{dou2020target}.
The enclosing-control setting of~\cite{dou2020target} explicitly assumes that all agents share a common coordinate frame and that each agent bearing-senses the target directly; here the absence of a common frame is the problem itself, and the target is never sensed by the vehicle in any frame.
Mobile-beacon localization in sensor networks uses a moving GPS-equipped anchor whose broadcasts let static nodes localize themselves~\cite{han2016mobileanchor}, but that line assumes a common reference frame and direct anchor-to-node measurements and designs trajectories for coverage; here the trajectory is used for identifiability of an unknown relay frame through which a hidden target is reported.
Recent work jointly localizes a target and self-calibrates sensor positions and synchronization offsets from sequential range-angle measurements, and shows that sensor-position self-calibration is impossible in that setting with a single sensor: at least two sensors are required~\cite{jia2025target}.
In the present formulation the known ego trajectory and paired relay-local vehicle/target packets supply the geometric diversity that a second sensor would otherwise provide, so a single relay with unknown translation \emph{and yaw} becomes sufficient after motion-induced excitation.
The present problem differs because the vehicle receives no target coordinates in any globally interpretable frame.
The target is also not a graph node with reciprocal relative measurements as in frame localization, and it is not directly bearing-sensed by each mobile agent as in enclosing-control work.
The closed-form two-view rotation recovery below is a mechanism shared with rigid registration, hand--eye calibration~\cite{tsai1989new}, and relative-frame calibration; the novelty is the relay geometry showing that one unknown-pose beacon becomes sufficient once the vehicle trajectory induces two distinct local vehicle observations, with the hidden target anchored through the same frame rather than observed by the moving sensor.
In calibrated global-frame baselines, beacon variables can often be eliminated after calibration; here the unknown beacon frame itself must be identified from motion before the target estimate is globally anchored.
This distinction appears in cooperative missions where a vehicle can execute a safety-constrained trajectory but cannot directly observe the target.
Receding-horizon chance-constrained planning and impact-aware autonomous operation motivate this separation between safe ego motion and uncertain external target information~\cite{bagla2019receding,johnson2024impact}.
Seen from this planning perspective, the stored trajectory is both the vehicle path and the calibration experiment that makes a teammate's local report usable; the relevant question is whether motion has made the beacon frame identifiable before target seeking relies on it.

The contributions are:
\begin{enumerate}
\item a two-view constructive identifiability theorem for an unknown-pose local-frame relay model of hidden-target localization, showing that two distinct vehicle-relative observations suffice and recovering relay yaw, relay position, and the hidden target in closed form, together with a local rank corollary and a shared-target multi-beacon extension;
\item a one-pose gauge theorem establishing that this excitation is minimal, since a single vehicle pose leaves a continuous yaw/translation/target ambiguity and static use of the relay is insufficient;
\item a trajectory-spread conditioning lemma showing that the centered spread $S_v$ is simultaneously a finite-window excitation margin and a yaw/translation Schur complement, together with a local weighted least-squares consequence for the noisy estimator;
\item reproducible simulations that test conditioning, EKF initialization, robust losses, target-packet averaging, dropouts, outliers, and vehicle-pose errors.
\end{enumerate}
The claim is local in the presence of noise: the theory identifies the geometry that removes the gauge and explains conditioning, while constructive initialization and multistart refinement address remote nonlinear minima numerically.

\begin{figure}[t]
\centering
\includegraphics[width=0.98\linewidth]{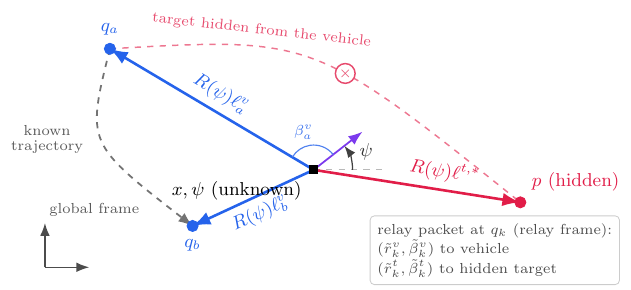}
\caption{Hidden-target self-calibration geometry. The vehicle traverses known poses $q_a,q_b$; the relay's position $x$ and heading $\psi$ (purple arrow, measured against the global-parallel dashed axis) are unknown; and the vehicle never senses the target $p$ (crossed dashed link). Each relay packet reports local-frame range-bearing pairs to the vehicle and to the target, with bearings measured from the relay heading ($\beta_a^v$ shown), and the solid arrows are the global realizations of the reported vectors under $R(\psi)$. Two distinct vehicle vectors determine $x,\psi$, and $\ell^{t,*}$ then anchors $p$.}
\label{fig:acc_geometry}
\end{figure}

\section{Measurement Model}

Let $p\in\mathbb{R}^2$ be a fixed hidden target and let beacon $i$ (the relay of the title; the two terms are used interchangeably) have unknown position $x_i\in\mathbb{R}^2$ and yaw $\psi_i$.
The beacon-to-global rotation is $R_i=R(\psi_i)$.
At vehicle pose $q_k$, beacon $i$ reports noisy local-frame packets
\[
\tilde\ell_{ik}^{v}=\tilde r_{ik}^{v}
\begin{bmatrix}\cos\tilde\beta_{ik}^{v}\\ \sin\tilde\beta_{ik}^{v}\end{bmatrix},\qquad
\tilde\ell_{ik}^{t}=\tilde r_{ik}^{t}
\begin{bmatrix}\cos\tilde\beta_{ik}^{t}\\ \sin\tilde\beta_{ik}^{t}\end{bmatrix}.
\]
The corresponding noiseless vectors are $\ell_{ik}^{v}$ and $\ell_i^{t,*}$, with
\begin{equation}
q_k=x_i+R_i\ell_{ik}^{v},\qquad
p=x_i+R_i\ell_i^{t,*}. \label{eq:model}
\end{equation}
Fig.~\ref{fig:acc_geometry} illustrates this geometry for a single beacon, together with the two-view motion that Theorem~\ref{thm:constructive} shows removes the calibration gauge.
The state is $z=[p^T,x_1^T,\psi_1,\ldots,x_N^T,\psi_N]^T$.
In the nominal Monte Carlo model the target packet is remeasured at each stored vehicle pose with independent noise; a separate validation row uses only one target packet to expose the averaging gain from repeated target measurements.
Packets are synchronized to stored $q_k$, beacon identity is known, and bearings/yaws are wrapped to $(-\pi,\pi]$.
The vehicle pose is treated as known by the estimator, because it is the global anchor that breaks the beacon-frame gauge; Section~\ref{sec:validation} separately corrupts $q_k$ with i.i.d.\ noise and random-walk drift to quantify this limitation.
The target is static during a batch window, so repeated target packets improve statistics but do not add new target geometry.
All ranges used in the rank arguments are assumed nonzero, so the polar map from local range-bearing packets to local vectors is a local diffeomorphism.
The superscripts $v$ and $t$ identify packet endpoints, not globally oriented bearings: both directions are local to the unknown beacon yaw, which is the source of the gauge below.

The estimator minimizes native range-bearing residuals, not Cartesianized residuals.
For one packet, with $\angle v:=\operatorname{atan2}(v_2,v_1)$ for nonzero $v\in\mathbb{R}^2$,
\[
g_{ik}(z)=
\begin{bmatrix}
\|q_k-x_i\|\\
\angle(q_k-x_i)-\psi_i\\
\|p-x_i\|\\
\angle(p-x_i)-\psi_i
\end{bmatrix},\quad
y_{ik}=
\begin{bmatrix}
\tilde r_{ik}^{v}\\ \tilde\beta_{ik}^{v}\\
\tilde r_{ik}^{t}\\ \tilde\beta_{ik}^{t}
\end{bmatrix}.
\]
With wrapped angle differences, $r_{ik}(z)=y_{ik}-g_{ik}(z)$ is weighted by
$W=\mathrm{diag}(\sigma_r^{-2},\sigma_\theta^{-2},\sigma_r^{-2},\sigma_\theta^{-2})$.
The batch objective is
\begin{equation}
J(z)=\frac{1}{2}\sum_{i=1}^{N}\sum_{k=1}^{K} r_{ik}(z)^TWr_{ik}(z). \label{eq:cost}
\end{equation}
Using native range-bearing residuals avoids mixing meter and radian errors in an unweighted Cartesian objective.
It also makes the information matrix and the EKF baseline use the same measurement units.

\section{Gauge Barrier and Motion-Induced Recovery}

For a single beacon we drop the beacon index and write $x,\psi,\ell_k^v,\ell^{t,*}$ and $R=R(\psi)$; when only one pose is involved we also drop the pose index and write $q,\ell^v$.

\begin{theorem}[One-pose gauge barrier]
\label{thm:gauge}
With one beacon and one vehicle pose, $(p,x,\psi)$ is not identifiable.
\end{theorem}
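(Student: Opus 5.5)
The plan is to exhibit an explicit one-parameter family of states that all reproduce the same noiseless measurements. This shows that the map from $(p,x,\psi)$ to the data $(\ell^v,\ell^{t,*})$ is not injective, and even that its fibers are continuous curves. With one beacon and one pose $q$, the data consist only of the two local vectors $\ell^v$ and $\ell^{t,*}$ in \eqref{eq:model}. These are equivalently the native range-bearing values in $g$, since the ranges are nonzero and the polar map is a bijection onto nonzero vectors once bearings are wrapped.

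The construction fixes the true state $(p,x,\psi)$ with data $\ell^v,\ell^{t,*}$, and for any angle $\phi$ defines
\[
\psi_\phi=\psi+\phi,\qquad x_\phi=q-R(\psi_\phi)\ell^v,\qquad p_\phi=x_\phi+R(\psi_\phi)\ell^{t,*}.
\]
By construction, $q=x_\phi+R(\psi_\phi)\ell^v$ and $p_\phi=x_\phi+R(\psi_\phi)\ell^{t,*}$, so $(p_\phi,x_\phi,\psi_\phi)$ satisfies \eqref{eq:model} with exactly the same packets. Equivalently, $g(z_\phi)=g(z)$ componentwise: ranges are preserved because rotations are isometries, and local bearings are preserved because $\angle(R(\psi_\phi)\ell)-\psi_\phi=\angle\ell$ modulo $2\pi$. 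Geometrically, $z_\phi$ is the true configuration rigidly rotated by $\phi$ about the vehicle pose $q$: $x_\phi=q+R(\phi)(x-q)$ and $p_\phi=q+R(\phi)(p-q)$.

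It remains to check that these states are genuinely distinct. For $\phi\notin 2\pi\mathbb{Z}$ the yaw differs modulo $2\pi$. In addition, $x_\phi\neq x$ because $x-q=-R\ell^v\neq 0$ by the nonzero-range assumption. Likewise $p_\phi\neq p$ unless $p=q$, and even in that case the yaw and relay position still differ. Hence $\phi\mapsto z_\phi$ is a continuous, injective-modulo-$2\pi$ curve of indistinguishable states, so $(p,x,\psi)$ is not identifiable, and the ambiguity is the continuous yaw/translation/target gauge described in the text.

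The step needing the most care is stating precisely what "identifiable" means, namely injectivity of the noiseless measurement map with yaw taken modulo $2\pi$, and handling bearing wrapping, so that $g(z_\phi)=g(z)$ holds exactly rather than only up to $2\pi$. A local alternative would be to show that the $4\times 5$ Jacobian of $g$ has a null direction $(\partial_\phi p_\phi,\partial_\phi x_\phi,1)$. This is automatic from the dimension count, since there are only 4 scalar measurements for 5 unknowns, but the explicit curve is stronger, so I would present that.
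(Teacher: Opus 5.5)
Your proposal is correct and is essentially the paper's proof: the paper uses the same one-parameter family $\psi+\alpha$, $x(\alpha)=q-R(\psi+\alpha)\ell^v$, $p(\alpha)=x(\alpha)+R(\psi+\alpha)\ell^{t,*}$ and observes that it reproduces the measurements for every $\alpha$. You add checks the paper leaves implicit: exact preservation of wrapped bearings, distinctness via the nonzero range $\|x-q\|$, and the rigid-rotation-about-$q$ picture. The paper instead records the tangent direction $(\delta p,\delta x,\delta\psi)=(R(\psi)S(\ell^{t,*}-\ell^v),\,-R(\psi)S\ell^v,\,1)$ for later use in the rank corollary.
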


\begin{proof}
For measured local vectors $\ell^v,\ell^{t,*}$ and any $\alpha\in\mathbb{R}$, set
$\psi(\alpha)=\psi+\alpha$, $x(\alpha)=q-R(\psi+\alpha)\ell^v$, and
$p(\alpha)=x(\alpha)+R(\psi+\alpha)\ell^{t,*}$.
Then the same measurements are reproduced for all $\alpha$, giving a one-dimensional yaw/translation/target gauge.
The tangent direction at $\alpha=0$ is
\begin{align*}
\delta\psi&=1,\qquad \delta x=-R(\psi)S\ell^v,\\
\delta p&=R(\psi)S(\ell^{t,*}-\ell^v),
\end{align*}
where $S=\left[\begin{smallmatrix}0&-1\\1&0\end{smallmatrix}\right]$.
Thus a rank-based local test must have at least this null direction when only one vehicle-relative observation is available.
\end{proof}

\begin{remark}[Relation to network localizability]
Viewed as a static relative-sensing network, the one-beacon configuration is nonlocalizable in the frameworks of rigidity, bearing-only localization, and relative-frame localization~\cite{eren2004rigidity,shames2013bearing,piovan2013frame}.
The beacon is an unanchored node with unknown orientation, the target is a degree-one node sensed only by that beacon, and no reciprocal edge exists because the vehicle does not directly sense the beacon.
Orientation recovery in~\cite{piovan2013frame} requires reciprocal angle-of-arrival information, while bearing localizability in~\cite{shames2013bearing} uses bearings in a common frame with anchors.
Theorem~\ref{thm:constructive} shows that ego motion substitutes for these missing resources: the directed beacon-to-vehicle edge, evaluated at two distinct known vehicle positions, becomes a time-indexed pair of measurements to anchors; once the relay frame is identified, the degree-one target is anchored through it.
\end{remark}

\begin{theorem}[Motion-induced two-view recovery]
\label{thm:constructive}
Consider one unknown-pose beacon and noiseless measurements at two vehicle poses $q_a,q_b$.
If $\ell_a^v\neq\ell_b^v$, then $p$ and $x$ are uniquely determined, and $\psi$ is determined modulo $2\pi$.
\end{theorem}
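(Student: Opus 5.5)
The plan is to eliminate the unknown translation $x$ by differencing the two vehicle equations in \eqref{eq:model}, recover the rotation from the difference vectors, and then back-substitute for $x$ and $p$. Subtracting $q_a=x+R\ell_a^v$ from $q_b=x+R\ell_b^v$ gives
\[
q_b-q_a=R(\psi)\,(\ell_b^v-\ell_a^v).
\]
The right-hand side is known up to the single unknown $R(\psi)$. The hypothesis $\ell_a^v\neq\ell_b^v$ makes $d_\ell:=\ell_b^v-\ell_a^v$ nonzero. Since rotations preserve norms, $d_q:=q_b-q_a$ is then nonzero as well, with $\|d_q\|=\|d_\ell\|$; this is a consistency condition automatically satisfied by noiseless data. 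So we must solve $R(\psi)d_\ell=d_q$ for $\psi$, given two nonzero vectors of equal length.

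The key step is the rotation. A planar rotation acting on a nonzero vector changes only its polar angle, so $R(\psi)d_\ell=d_q$ holds if and only if $\psi+\angle d_\ell\equiv\angle d_q \pmod{2\pi}$. This yields
\[
\psi \equiv \angle(q_b-q_a)-\angle(\ell_b^v-\ell_a^v)\pmod{2\pi}.
\]
For uniqueness we argue directly: if $R(\psi)d_\ell=R(\psi')d_\ell$, then $R(\psi-\psi')$ fixes a nonzero vector. A planar rotation fixing a nonzero vector is the identity, so $\psi-\psi'\in2\pi\mathbb{Z}$. In closed form, one can identify $\mathbb{R}^2$ with $\mathbb{C}$ and write $e^{i\psi}=d_q/d_\ell$. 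Equivalently,
\[
\cos\psi=\frac{d_\ell^Td_q}{\|d_\ell\|^2},\qquad \sin\psi=\frac{d_\ell^TS^Td_q}{\|d_\ell\|^2}
\]
(with $S$ as in the proof of Theorem~\ref{thm:gauge}), which matches the paper's closed-form emphasis.

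With $R=R(\psi)$ now uniquely determined as a matrix (it does not depend on the $2\pi$ representative), back-substitution gives $x=q_a-R\ell_a^v$, which is unique. One should also check consistency: $q_b-R\ell_b^v$ returns the same $x$, by the differencing identity. Finally the target equation gives $p=x+R\ell^{t,*}$, again unique. Any other solution $(p',x',\psi')$ reproducing the same packets must satisfy the same differenced equation, hence $\psi'\equiv\psi$, and therefore $x'=x$ and $p'=p$. This rules out the one-parameter gauge of Theorem~\ref{thm:gauge}: a gauge direction would require $R(\alpha)$ to fix $d_\ell\neq0$, which forces $\alpha\in2\pi\mathbb{Z}$.

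The main obstacle is small. It is the rotation step: making uniqueness rigorous and handling the modulo-$2\pi$ statement cleanly. The hypothesis $\ell_a^v\neq\ell_b^v$ is exactly what prevents the stabilizer of $d_\ell$ in $SO(2)$ from being all of $SO(2)$; if $d_\ell=0$, every $\psi$ would be admissible, recovering the gauge of Theorem~\ref{thm:gauge}. Everything else is linear back-substitution.
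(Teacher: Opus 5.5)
Your proposal is correct and follows essentially the same route as the paper: difference the two vehicle equations to eliminate $x$, read off $\psi$ modulo $2\pi$ from the angle between $q_b-q_a$ and $\ell_b^v-\ell_a^v$, then back-substitute for $x$ and $p$. Your stabilizer argument and the explicit $\cos\psi,\sin\psi$ formulas only spell out the paper's one-line uniqueness claim in more detail.
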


\begin{proof}
Subtracting the two vehicle equations in \eqref{eq:model} gives
\[
q_b-q_a=R(\psi)(\ell_b^v-\ell_a^v).
\]
Since $\ell_b^v-\ell_a^v\neq0$,
\[
\psi=\angle(q_b-q_a)-\angle(\ell_b^v-\ell_a^v)\pmod{2\pi}.
\]
Then $x=q_a-R(\psi)\ell_a^v$ and $p=x+R(\psi)\ell^{t,*}$ by substitution.
No other yaw modulo $2\pi$ can rotate the nonzero vector $\ell_b^v-\ell_a^v$ into $q_b-q_a$, so the recovery is unique.
If $\ell_a^v=\ell_b^v$, the yaw equation collapses and the one-pose gauge in Theorem~\ref{thm:gauge} remains.
\end{proof}

\begin{corollary}[Local rank after gauge removal]
\label{cor:rank}
For one beacon and $K$ stored poses, the stacked residual Jacobian of \eqref{eq:cost} has rank five at the true state if and only if $\{\ell_k^v\}_{k=1}^{K}$ contains at least two distinct vectors.
\end{corollary}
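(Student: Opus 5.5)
The plan is to reduce the native range-bearing Jacobian to a Cartesian linear map and then compute that map's rank explicitly.

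\textbf{Step 1: pass to Cartesian residuals.} Since all ranges are nonzero, the polar map from $(r,\beta)$ to local vectors is a local diffeomorphism. So each packet's $4\times5$ block of the stacked Jacobian equals an invertible $4\times4$ matrix times the Jacobian of the local-vector map
\[
h_k(z)=\bigl(R(\psi)^T(q_k-x),\;R(\psi)^T(p-x)\bigr).
\]
The block-diagonal invertible factor does not change rank, and the weight $W$ is positive definite and irrelevant to rank. The corollary therefore reduces to computing the rank of the stacked Jacobian of the $h_k$.

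\textbf{Step 2: compute the Cartesian blocks.} At the true state, with $R=R(\psi)$, we have $\partial_\psi R^T v=-S R^T v$, using $R^TS=SR^T$ in the plane. Differentiating $h_k$ with respect to $(p,x,\psi)$ gives:
\begin{itemize}
\item the vehicle rows $[\,0,\; -R^T,\; -S\ell_k^v\,]$;
\item the target rows $[\,R^T,\; -R^T,\; -S\ell^{t,*}\,]$.
\end{itemize}
The target rows are identical for every $k$. The rank of the stack is therefore the rank of one target block together with all the vehicle blocks.

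\textbf{Step 3: compute the null space.} A vector $(\delta p,\delta x,\delta\psi)$ lies in the null space if and only if both of the following hold:
\begin{enumerate}
\item $R^T\delta x=-\delta\psi\,S\ell_k^v$ for every $k$;
\item $R^T(\delta p-\delta x)=\delta\psi\,S\ell^{t,*}$.
\end{enumerate}
Condition (2) determines $\delta p$ once $\delta x$ and $\delta\psi$ are fixed, so the target block always contributes exactly rank two.

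Now compare two poses $a,b$ in condition (1). This gives $\delta\psi\,S(\ell_a^v-\ell_b^v)=0$, and $S$ is invertible.

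\emph{Sufficiency.} If two of the $\ell_k^v$ differ, this forces $\delta\psi=0$. Then $\delta x=0$ and $\delta p=0$, so the null space is trivial and the rank is five.

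\emph{Necessity.} If all the $\ell_k^v$ coincide, the vector $\delta\psi=1$, $\delta x=-RS\ell^v$, $\delta p=RS(\ell^{t,*}-\ell^v)$ satisfies every equation. This is exactly the tangent of the gauge in Theorem~\ref{thm:gauge}. In this case the rank is at most four; in fact it is exactly four, since $\delta\psi$ parametrizes the whole null space.

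\textbf{Main obstacle.} The main difficulty is Step 1: being careful that wrapping the bearings and working with native residuals do not change rank. I would handle this by noting that the wrapped angle difference is smooth near the true state, whose residual is zero. The derivative of $\angle(\cdot)$ together with $\|\cdot\|$ is exactly the inverse of the polar Jacobian, and it is invertible because the ranges are nonzero. The rest is linear algebra, paralleling the constructive argument of Theorem~\ref{thm:constructive} at first order.
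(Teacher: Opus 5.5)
Your proposal is correct and matches the paper's proof essentially step for step. You reduce the native range-bearing Jacobian to Cartesian local-vector constraints through the invertible block-diagonal polar Jacobian. When two $\ell_k^v$ differ, subtracting two vehicle equations forces $\delta\psi=0$ and hence a trivial null space; when they all coincide, the tangent of Theorem~\ref{thm:gauge} is a null vector. The only difference is cosmetic: you differentiate the local-frame output $R(\psi)^T(q_k-x)$ instead of the global-frame constraint $q_k=x+R\ell_k^v$, so your vehicle and target rows are the paper's rows left-multiplied by $-R^T$ and $R^T$ respectively, which leaves the rank unchanged.
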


\begin{proof}
Because the polar packet map is locally invertible for nonzero ranges, it is enough to study the Cartesian local-vector constraints in \eqref{eq:model}.
For one beacon and perturbation $\delta z=[\delta p^T,\delta x^T,\delta\psi]^T$, the linearized constraint matrix is
\[
\mathcal O_K=
\begin{bmatrix}
0&I&R S\ell_1^v\\
I&-I&-R S\ell^{t,*}\\
\vdots&\vdots&\vdots\\
0&I&R S\ell_K^v\\
I&-I&-R S\ell^{t,*}
\end{bmatrix}.
\]
The native range-bearing Jacobian is obtained by left multiplying this local-vector matrix by the block-diagonal polar Jacobian of each nonzero vehicle and target vector.
Hence the range-bearing residual Jacobian and $\mathcal O_K$ have the same column rank on the nonzero-range set used by the estimator.
The nullspace equations are
\[
0=\delta x+R S\ell_k^v\delta\psi,\qquad
0=\delta p-\delta x-RS\ell^{t,*}\delta\psi .
\]
If all $\ell_k^v$ are identical, the nullspace equations force every solution onto the gauge direction of Theorem~\ref{thm:gauge}, so the nullspace is exactly one-dimensional and the rank is four.
If two vectors differ, subtracting the corresponding vehicle equations gives $RS(\ell_a^v-\ell_b^v)\delta\psi=0$; since $R$ and $S$ are nonsingular and $\ell_a^v\neq\ell_b^v$, $\delta\psi=0$, hence $\delta x=0$ and $\delta p=0$.
Thus the one-beacon rank is five.
\end{proof}

Therefore, for the one-beacon unknown-pose model, one vehicle-relative observation is insufficient, while two distinct vehicle-relative observations are sufficient for local identification of the five-dimensional state.
One beacon is locally sufficient only as a trajectory-excited sensor, not as a static calibrated landmark.

\begin{corollary}[Shared-target beacon augmentation]
\label{cor:augmentation}
Suppose one beacon satisfies Corollary~\ref{cor:rank}, so the target $p$ is locally identified.
For any additional beacon $j$, one synchronized packet at vehicle pose $q_k$ locally identifies $(x_j,\psi_j)$ if $\ell_j^{t,*}\neq\ell_{jk}^{v}$.
In particular,
\[
\psi_j=\angle(p-q_k)-\angle(\ell_j^{t,*}-\ell_{jk}^{v}),\qquad
x_j=q_k-R(\psi_j)\ell_{jk}^{v}.
\]
Thus the $N$-beacon Jacobian has rank $2+3N$ under one trajectory-excited anchor beacon and one noncoincident target/vehicle packet for each remaining beacon.
\end{corollary}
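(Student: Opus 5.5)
The plan is to treat $p$ as a known global anchor and reduce each additional beacon to the two-view argument of Theorem~\ref{thm:constructive}, with the target playing the role of the second "vehicle pose."

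\textbf{Closed-form recovery for beacon $j$.} In the noiseless model \eqref{eq:model}, beacon $j$ at pose $q_k$ satisfies
\[
q_k=x_j+R_j\ell_{jk}^v,\qquad p=x_j+R_j\ell_j^{t,*}.
\]
Subtracting gives $p-q_k=R(\psi_j)(\ell_j^{t,*}-\ell_{jk}^v)$. This has exactly the form of the two-view equation, with $(q_a,q_b)$ replaced by $(q_k,p)$. Since both $q_k$ and the locally identified $p$ are known, the hypothesis $\ell_j^{t,*}\neq\ell_{jk}^v$ makes the right-hand vector nonzero. The stated angle formula then determines $\psi_j$ modulo $2\pi$, and substitution gives $x_j$. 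The nondegeneracy is automatic: because $R_j$ is a rotation, $p-q_k$ is nonzero exactly when the local difference is nonzero.

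\textbf{Local rank.} I will order the state as $z=[p^T,x_1^T,\psi_1,\dots,x_N^T,\psi_N]^T$. As in Corollary~\ref{cor:rank}, the polar packet map is locally invertible for nonzero ranges, so it suffices to work with the Cartesian linearized constraints. Beacon $j$'s rows are
\[
0=\delta x_j+R_jS\ell_{jk}^v\,\delta\psi_j,\qquad 0=\delta p-\delta x_j-R_jS\ell_j^{t,*}\delta\psi_j .
\]
Note that they involve $\delta p$ but no other beacon's variables. The argument for a null vector then proceeds in three steps:
\begin{enumerate}
\item The anchor beacon's rows alone force $\delta p=0$, $\delta x_1=0$ and $\delta\psi_1=0$, by Corollary~\ref{cor:rank}.
\item For each $j\ge2$, substituting $\delta p=0$ and subtracting the two equations gives $R_jS(\ell_j^{t,*}-\ell_{jk}^v)\delta\psi_j=0$. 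Since $R_j$ and $S$ are nonsingular and the difference is nonzero, $\delta\psi_j=0$.
\item The vehicle equation then gives $\delta x_j=0$.
\end{enumerate}
Hence the nullspace is trivial and the column rank equals the state dimension $2+3N$.

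\textbf{Main obstacle.} The delicate point is that the Jacobian is joint rather than block diagonal: $\delta p$ couples all beacons. I will handle this by arguing on the nullspace in the order above, anchor beacon first, rather than claiming a block decomposition. A second care point is that "locally identified $p$" in the statement should be read as exact knowledge of $p$ in the noiseless case, which justifies using $p$ as a known anchor in the closed-form step. Adding further rows, such as more packets or more poses, cannot reduce the rank, so the conclusion holds for the full stacked Jacobian.
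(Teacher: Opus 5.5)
Your proposal is correct and follows the paper's proof. The closed form comes from subtracting the vehicle and target equations and reusing the orientation argument of Theorem~\ref{thm:constructive}, exactly as in the paper. Your anchor-first nullspace elimination is the paper's block lower-triangular argument: order the variables as $(p,x_1,\psi_1,x_2,\psi_2,\ldots)$, and each noncoincident packet then gives a full-column-rank three-column diagonal block. The coupling through $\delta p$ that you flag as the main obstacle is what makes the Jacobian lower-triangular rather than block-diagonal, so the two arguments coincide.
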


\begin{proof}
Once $p$ is locally fixed by an excited beacon, subtracting $q_k=x_j+R_j\ell_{jk}^{v}$ from $p=x_j+R_j\ell_j^{t,*}$ gives
\[
p-q_k=R_j(\ell_j^{t,*}-\ell_{jk}^{v}).
\]
If the local difference is nonzero, the same orientation argument as Theorem~\ref{thm:constructive} fixes $\psi_j$ modulo $2\pi$, and then $x_j$ follows by substitution.
At the differential level, ordering the variables as $(p,x_1,\psi_1,x_2,\psi_2,\ldots)$ and grouping rows by beacon gives a block lower-triangular Jacobian after the anchor beacon fixes $p$; each noncoincident additional packet contributes a full-column-rank three-column beacon-pose block.
Repeated rows for the same static target packet can improve noise averaging but add no new rank.
Applying the argument to each additional beacon gives the stated rank.
\end{proof}

\begin{remark}[Auxiliary-beacon margin]
For beacon $j$, the separation $\|\ell_j^{t,*}-\ell_{jk}^{v}\|$ plays the same role as the two-view separation $\|\ell_b^v-\ell_a^v\|$ for the anchor beacon.
Coincident target and vehicle directions in the beacon frame are the degenerate geometry, and the closed-form seed error for $(x_j,\psi_j)$ scales inversely with this separation.
\end{remark}

\emph{Finite-window observability interpretation.}
For one beacon, the finite-window local-vector output
\[
\Phi_K(z)=\{R(\psi)^T(q_k-x),\,R(\psi)^T(p-x)\}_{k=1}^{K}
\]
has Jacobian rank four if all $\ell_k^v$ are identical and rank five otherwise.
The only possible loss of rank is the yaw/translation gauge in Theorem~\ref{thm:gauge}.
The differential of the vehicle component is
\[
\delta \ell_k^v=-R^T\delta x-S\ell_k^v\delta\psi ,
\]
and the target component gives
\[
\delta \ell^t=R^T(\delta p-\delta x)-S\ell^{t,*}\delta\psi .
\]
Setting these differentials to zero gives the same nullspace equations used in Corollary~\ref{cor:rank}.
Thus the local output codistribution has dimension four in the one-pose gauge case and five once motion creates a second distinct local observation.
The known trajectory fixes the global translation/rotation gauge that would appear in a purely relative SLAM problem; the remaining unobservable direction is not global-frame gauge but a beacon-frame self-calibration gauge.
Vehicle motion is therefore not merely additional data; it is the input that changes the measurement codistribution.

\begin{lemma}[Finite-window excitation margin]
\label{lem:spread}
For one beacon, define the centered local vehicle-vector spread
\[
S_v=\sum_{k=1}^{K}\|\ell_k^v-\bar\ell^v\|^2,\qquad
\bar\ell^v=\frac{1}{K}\sum_{k=1}^{K}\ell_k^v .
\]
For any fixed yaw perturbation $\delta\psi$,
\[
\min_{\delta x}\sum_k\|\delta x+R(\psi)S\ell_k^v\delta\psi\|^2
=\delta\psi^2S_v .
\]
The yaw direction is unseparated from beacon translation exactly when $S_v=0$, and its separation from translation is set by the centered spread of the local vehicle vectors.
Moreover, in the noiseless model $S_v=\sum_k\|q_k-\bar q\|^2$, so finite-window vehicle motion directly controls the self-calibration margin.
\end{lemma}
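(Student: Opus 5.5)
The plan is to treat the minimization over $\delta x$ as an ordinary least-squares problem in $\mathbb{R}^2$ with a fixed parameter $\delta\psi$. Write $a_k:=R(\psi)S\ell_k^v\delta\psi$. The objective is then $f(\delta x)=\sum_k\|\delta x+a_k\|^2$, which is strictly convex and quadratic. Setting the gradient to zero gives the unique minimizer
\[
\delta x^\star=-\bar a=-R(\psi)S\bar\ell^v\delta\psi .
\]
Substituting back, the minimum value is $\sum_k\|a_k-\bar a\|^2$. Since $a_k-\bar a=R(\psi)S(\ell_k^v-\bar\ell^v)\delta\psi$, and both $R(\psi)$ and $S$ are orthogonal (rotations) and so preserve norms, each term equals $\delta\psi^2\|\ell_k^v-\bar\ell^v\|^2$. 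Summing gives $\delta\psi^2 S_v$.

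For the separation claim, I would read off the identity directly. With $\delta\psi\neq0$, the best translation can cancel the yaw-induced vehicle-row perturbation exactly iff $S_v=0$. This holds iff all $\ell_k^v$ equal $\bar\ell^v$, i.e.\ all local vehicle vectors coincide. That is precisely the degenerate case of Corollary~\ref{cor:rank} and the gauge of Theorem~\ref{thm:gauge}. I would also note the Schur-complement interpretation: restricted to the vehicle rows of $\mathcal O_K$, the Gram matrix in the $(\delta x,\delta\psi)$ block has $x$-block $KI$, and the $\psi$-entry reduced by eliminating $\delta x$ equals $S_v$. The target rows involve $\delta p$ and can be absorbed by $\delta p$, so they do not alter this margin.

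For the final claim I would use the noiseless model \eqref{eq:model}: $q_k=x+R\ell_k^v$, hence $\bar q=x+R\bar\ell^v$ and $q_k-\bar q=R(\ell_k^v-\bar\ell^v)$. Norm preservation under $R$ then gives $S_v=\sum_k\|q_k-\bar q\|^2$.

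The only point needing care is keeping the claims about "separation" precise rather than informal. I would state the equivalence strictly as $S_v=0$ iff all $\ell_k^v$ are equal, and leave the rest as an interpretation of the identity. The algebra itself is routine once the minimizer is identified as the mean.
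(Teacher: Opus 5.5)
Your proposal is correct and follows the paper's route: eliminate $\delta x$ by centering (the minimizer is $-R(\psi)S\bar\ell^v\delta\psi$), use orthogonality of $R(\psi)$ and $S$, and then use $q_k-\bar q=R(\ell_k^v-\bar\ell^v)$ in the noiseless model. You simply write out the least-squares step and the Schur-complement reading that the paper's terse proof leaves implicit.
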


\begin{proof}
Linearizing the vehicle part of \eqref{eq:model} gives
$0=\delta x+R(\psi)S\ell_k^v\delta\psi$.
Eliminating the best translation perturbation gives the displayed identity by centering the vectors $\ell_k^v$.
Finally, $\ell_k^v=R^T(q_k-x)$ implies $\ell_k^v-\bar\ell^v=R^T(q_k-\bar q)$, and rotation preserves norms.
\end{proof}

The metric $S_v$ is independent of global translation and depends only on how much the vehicle trajectory changes its appearance in the beacon frame.
It is therefore a direct design variable for motion planning: a small trajectory baseline gives tightly clustered local vectors and weak yaw/translation separation even at full rank, whereas an excited trajectory gives the yaw Jacobian column a component beacon translation cannot reproduce~\cite{leny2018localizability}.
Equivalently, $S_v=K^{-1}\sum_{a<b}\|\ell_a^v-\ell_b^v\|^2$; hence $S_v>0$ if and only if at least one pair of vehicle views differs, while its magnitude aggregates the separation of all informative viewpoints.
In local-vector coordinates with isotropic vehicle-vector weight $w_vI$, the yaw Schur complement after eliminating the target and beacon translation is exactly $w_vS_v$.

\begin{corollary}[Spread-certified pair selection]
\label{cor:pairselection}
Let $d_\star=\max_{a<b}\|\ell_b^v-\ell_a^v\|$ over a window of $K\ge2$ views, and suppose every measured vehicle vector obeys $\|\tilde\ell_k^v-\ell_k^v\|\le\epsilon$.
Select the pair with largest measured separation and form the two-view yaw estimate of Theorem~\ref{thm:constructive}.
If $d_\star>8\epsilon$, its true separation $d_s$ and yaw error satisfy
\begin{equation}
d_s\ge d_\star-4\epsilon>4\epsilon,\qquad
|\hat\psi-\psi|\le\frac{4\pi\epsilon}{d_\star}
\le4\pi\epsilon\sqrt{\frac{K-1}{2S_v}}. \label{eq:pairbound}
\end{equation}
Thus the full-window spread certifies a finite-noise two-view initializer even though the constructive recovery itself uses only one selected pair.
\end{corollary}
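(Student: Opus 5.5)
Three pieces are needed: a triangle-inequality bound on the selected pair's true separation, a perturbation bound for the angle of a noisy difference vector, and the relation between $d_\star$ and $S_v$.

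\emph{Separation of the selected pair.} Write $\tilde d_{ab}=\|\tilde\ell_b^v-\tilde\ell_a^v\|$ and $d_{ab}=\|\ell_b^v-\ell_a^v\|$. The triangle inequality and the $\epsilon$ bound give $|\tilde d_{ab}-d_{ab}|\le2\epsilon$ for every pair. Let $s$ be the selected pair (largest $\tilde d_{ab}$) and let $\star$ be the pair attaining $d_\star$. Then
\[
d_s\ge\tilde d_s-2\epsilon\ge\tilde d_\star-2\epsilon\ge d_\star-4\epsilon .
\]
Since $d_\star>8\epsilon$, this gives $d_s>4\epsilon$, which is the first claim in \eqref{eq:pairbound}.

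\emph{Yaw error.} Theorem~\ref{thm:constructive} gives $\psi=\angle(q_b-q_a)-\angle(\ell_b^v-\ell_a^v)$ with $q$ known exactly. The estimate $\hat\psi$ uses the measured difference $\tilde u=\tilde\ell_b^v-\tilde\ell_a^v$ in place of the true $u=\ell_b^v-\ell_a^v$. So $|\hat\psi-\psi|$ equals the wrapped angle between $u$ and $\tilde u$, where $\|\tilde u-u\|\le2\epsilon$ and $\|u\|=d_s$.

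I will use the elementary fact that if $\|\tilde u-u\|\le\eta<\|u\|$, then the angle between them is at most $\arcsin(\eta/\|u\|)\le\frac{\pi}{2}\eta/\|u\|$. The last step uses concavity of $\arcsin$ on $[0,1]$, which gives $\arcsin t\le\pi t/2$. With $\eta=2\epsilon$ this yields $|\hat\psi-\psi|\le\pi\epsilon/d_s$.

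Next, $d_s\ge d_\star-4\epsilon\ge d_\star/2$ because $d_\star\ge8\epsilon$. Hence $|\hat\psi-\psi|\le2\pi\epsilon/d_\star$, which is within the stated $4\pi\epsilon/d_\star$. I expect this to be the main obstacle only in bookkeeping: I must justify the $\arcsin$ bound and confirm that $\eta<\|u\|$ holds. It does, since $d_s>4\epsilon>2\epsilon$.

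\emph{Relation to the spread.} I invoke the pairwise identity stated after Lemma~\ref{lem:spread}:
\[
S_v=K^{-1}\sum_{a<b}\|\ell_a^v-\ell_b^v\|^2\le K^{-1}\binom{K}{2}d_\star^2=\frac{K-1}{2}d_\star^2 .
\]
This gives $1/d_\star\le\sqrt{(K-1)/(2S_v)}$, completing \eqref{eq:pairbound}. The case $S_v=0$ cannot occur, since then $d_\star=0$ would contradict $d_\star>8\epsilon$ (taking $\epsilon\ge0$). If the pairwise identity needs checking, I will expand $\sum_{a,b}\|\ell_a-\ell_b\|^2=2K\sum_k\|\ell_k-\bar\ell\|^2$.
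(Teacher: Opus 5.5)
Your proof is correct and follows the same three steps as the paper's proof. These are the per-pair $2\epsilon$ triangle-inequality argument giving $d_s\ge d_\star-4\epsilon$, an angle-perturbation bound for the noisy difference vector combined with $d_s\ge d_\star/2$, and the pairwise identity $S_v=K^{-1}\sum_{a<b}\|\ell_a^v-\ell_b^v\|^2$ giving $d_\star^2\ge 2S_v/(K-1)$. Your $\arcsin$ argument actually yields the sharper intermediate bound $\pi\epsilon/d_s$, where the paper asserts $2\pi\epsilon/d_s$.

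One small correction: $\arcsin t\le \pi t/2$ on $[0,1]$ follows from \emph{convexity} of $\arcsin$ there, since the function lies below its chord. Equivalently, it follows from concavity of $\sin$ on $[0,\pi/2]$. Concavity of $\arcsin$ would give the reverse inequality.
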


\begin{proof}
For the true maximizer, measurement perturbations reduce its measured separation by at most $2\epsilon$.
The selected pair has at least that measured separation, and converting it back to true separation loses at most another $2\epsilon$, giving $d_s\ge d_\star-4\epsilon$.
The angle between a nonzero difference vector and a perturbation of norm at most $2\epsilon$ is at most $2\pi\epsilon/d_s$ when $d_s>4\epsilon$, hence the first yaw bound after using $d_s>d_\star/2$.
Finally, the pairwise identity for $S_v$ implies
$d_\star^2\ge 2S_v/(K-1)$, which gives the second bound.
\end{proof}

Corollary~\ref{cor:pairselection} explains why the initializer searches the entire stored window rather than accepting the first two packets.
Additional views need not add state dimension or change the noiseless sufficiency result; they improve the chance of finding a well-separated pair and therefore enlarge the certified initialization basin before nonlinear refinement begins.

\begin{proposition}[Native range-bearing information bound]
\label{prop:polarinfo}
Let $r_k^v=\|\ell_k^v\|>0$, $u_k=\ell_k^v/r_k^v$, $u_k^\perp=Su_k$, and let the native vehicle-packet covariance be $\Sigma=\operatorname{diag}(\sigma_r^2,\sigma_\theta^2)$.
The information induced on a local-vector perturbation is
\[
Q_k=\frac{u_ku_k^T}{\sigma_r^2}
 +\frac{u_k^\perp(u_k^\perp)^T}{(r_k^v)^2\sigma_\theta^2}.
\]
Define $m=\min_k\lambda_{\min}(Q_k)>0$ and $M=\max_k\lambda_{\max}(Q_k)<\infty$.
After eliminating beacon translation, the scalar yaw information $\mathcal I_\psi$ satisfies
\begin{equation}
mS_v\le \mathcal I_\psi\le MS_v. \label{eq:polarbound}
\end{equation}
Consequently $S_v=0$ is exactly the native-residual gauge case, while for $S_v>0$ the linearized yaw covariance obeys
$1/(MS_v)\le[ F^{-1}]_{\psi\psi}\le1/(mS_v)$ after marginalizing the unconstrained target block.
\end{proposition}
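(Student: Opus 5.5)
First I derive $Q_k$ from the polar map. The local vehicle vector is $\ell=r(\cos\beta,\sin\beta)^T$. Its Jacobian with respect to $(r,\beta)$ is $P_k=[u_k,\; r_k^v u_k^\perp]$. This matrix is invertible for $r_k^v>0$, with
\[
P_k^{-1}=\begin{bmatrix}u_k^T\\ (u_k^\perp)^T/r_k^v\end{bmatrix}.
\]
A native-residual perturbation $\delta y$ corresponds to $\delta\ell=P_k\delta y$. The weighted norm $\delta y^T\Sigma^{-1}\delta y$ therefore equals $\delta\ell^T P_k^{-T}\Sigma^{-1}P_k^{-1}\delta\ell$, and expanding this product gives exactly $Q_k$. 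Its eigenvalues are $1/\sigma_r^2$ and $1/((r_k^v)^2\sigma_\theta^2)$, both positive and finite. Hence $0<m\le M<\infty$ whenever there are finitely many poses with nonzero ranges.

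Next I write the linearized information in local-vector coordinates. By the differential in the finite-window interpretation, the vehicle block has $\delta\ell_k^v=-R^T\delta x-S\ell_k^v\delta\psi$. The target block involves $\delta p$, and $\delta p$ is unconstrained by vehicle rows. Marginalizing the target therefore removes the target rows entirely: for any $(\delta x,\delta\psi)$ one can choose $\delta p$ to zero the target residual. This is the Schur-complement remark after Lemma~\ref{lem:spread}. The yaw information after eliminating translation and target is then
\[
\mathcal I_\psi=\min_{\delta x}\sum_k e_k^TQ_ke_k,\qquad e_k=R^T\delta x+S\ell_k^v,
\]
taken with $\delta\psi=1$. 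I will justify "minimum over the nuisance equals the Schur complement" by the standard identity for a quadratic form in $(\delta\psi,\delta x)$.

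For the sandwich, I bound each term by $m\|e_k\|^2\le e_k^TQ_ke_k\le M\|e_k\|^2$. For the lower bound, minimizing over $\delta x$ and using $\min(\text{sum})\ge m\min\sum\|e_k\|^2$ gives $\mathcal I_\psi\ge mS_v$ by Lemma~\ref{lem:spread}; rotation by $R^T$ and $S$ preserves norms, so the lemma's centered-spread identity applies verbatim. For the upper bound, I evaluate the weighted objective at the unweighted minimizer $\delta x^\star$, i.e.\ the centering choice. This gives $\mathcal I_\psi\le\sum_k e_k^{\star T}Q_ke_k^\star\le M\sum_k\|e_k^\star\|^2=MS_v$.

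The main subtlety is the upper bound: the weighted minimizer differs from the centered one, so I use evaluation at a feasible point rather than comparing minimizers.

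Finally, $S_v=0$ gives $\mathcal I_\psi=0$, which reproduces the gauge of Theorem~\ref{thm:gauge} and the rank drop of Corollary~\ref{cor:rank}. For $S_v>0$, the Schur complement $\mathcal I_\psi$ is positive. By the block-inverse formula, $[F^{-1}]_{\psi\psi}=1/\mathcal I_\psi$, provided the remaining $(\delta p,\delta x)$ block is invertible. That holds because the target rows give identity on $\delta p-\delta x$ and the vehicle rows give full rank on $\delta x$. Inverting \eqref{eq:polarbound} then yields the covariance bounds.
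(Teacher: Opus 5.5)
Your proposal is correct and follows essentially the same route as the paper. Like the paper, you derive $Q_k$ from the polar Jacobian, absorb the common target rows into $\delta p$, and eliminate translation. You then use the sandwich $mI\preceq Q_k\preceq MI$ with Lemma~\ref{lem:spread} for the lower bound, evaluate at the unweighted centering translation for the upper bound, and invert the scalar Schur complement. Your explicit check that the $(\delta p,\delta x)$ nuisance block is positive definite, which the block-inverse step needs, is a small addition that the paper leaves implicit.
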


\begin{proof}
For the polar map $\pi(\ell)=(\|\ell\|,\operatorname{atan2}(\ell_y,\ell_x))$, its Jacobian is $D\pi(\ell_k^v)=[u_k^T;\ (u_k^\perp)^T/r_k^v]$, hence $Q_k=D\pi^T\Sigma^{-1}D\pi$ and its radial and tangential eigenvalues are $\sigma_r^{-2}$ and $((r_k^v)^2\sigma_\theta^2)^{-1}$.
Eliminating beacon translation from the vehicle rows gives
\[
\mathcal I_\psi=\min_a\sum_k(a+S\ell_k^v)^TQ_k(a+S\ell_k^v).
\]
Bounding every $Q_k$ between $mI$ and $MI$, then using Lemma~\ref{lem:spread}, gives the lower bound; choosing the unweighted minimizing translation $a=-S\bar\ell^v$ gives the upper bound.
Target rows contribute no additional calibration information after $p$ is marginalized because their common local vector can be absorbed by a target perturbation.
Inverting the scalar Schur complement yields the covariance bounds.
\end{proof}

Proposition~\ref{prop:polarinfo} makes the range dependence explicit rather than hiding it in a continuity statement.
On an operating set $r_k^v\in[r_{\min},r_{\max}]$, one may use
$m=\min\{\sigma_r^{-2},(r_{\max}^2\sigma_\theta^2)^{-1}\}$ and
$M=\max\{\sigma_r^{-2},(r_{\min}^2\sigma_\theta^2)^{-1}\}$.
Thus $S_v$ controls native-residual information up to known radial/tangential noise anisotropy, while range determines the conversion from angular noise to transverse position uncertainty.

\begin{corollary}[Finite-window excitation]
\label{cor:excitation}
In the noiseless one-beacon model, any batch window containing at least two distinct known vehicle positions has $S_v>0$ and removes the one-pose gauge.
Consequently, a controller can enforce a checkable excitation condition by retriggering motion until $\sum_k\|q_k-\bar q\|^2$ exceeds a chosen threshold.
\end{corollary}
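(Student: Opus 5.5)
The plan is to reduce the corollary to Lemma~\ref{lem:spread} and Corollary~\ref{cor:rank}, using the noiseless identity $S_v=\sum_k\|q_k-\bar q\|^2$.

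First, suppose the window contains two known vehicle positions $q_a\neq q_b$. In the noiseless model \eqref{eq:model} we have $\ell_k^v=R(\psi)^T(q_k-x)$, so
\[
\ell_b^v-\ell_a^v=R(\psi)^T(q_b-q_a)\neq0,
\]
because $R(\psi)$ is invertible. Thus distinct positions give distinct local vehicle vectors.

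Next, I will show $S_v>0$. The pairwise identity $S_v=K^{-1}\sum_{a<b}\|\ell_a^v-\ell_b^v\|^2$, stated after Lemma~\ref{lem:spread}, is a sum of nonnegative terms. At least one term is strictly positive, so $S_v>0$. Equivalently, Lemma~\ref{lem:spread} gives $S_v=\sum_k\|q_k-\bar q\|^2$, and this vanishes only when every $q_k$ equals the mean, which contradicts $q_a\neq q_b$.

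Then I will show the gauge is removed. Corollary~\ref{cor:rank} gives rank five at the true state, so the tangent direction of Theorem~\ref{thm:gauge} is no longer in the nullspace. At the global level, Theorem~\ref{thm:constructive} applied to the pair $(q_a,q_b)$ recovers $(p,x,\psi)$ uniquely with $\psi$ modulo $2\pi$. Lemma~\ref{lem:spread} also shows that the yaw/translation separation equals $\delta\psi^2S_v>0$ for every $\delta\psi\neq0$.

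For the controller statement, the quantity $\sum_k\|q_k-\bar q\|^2$ depends only on the stored known trajectory. It is therefore computable without any relay measurements. I will argue that the retrigger rule is well-defined and terminates, under the mild assumption that the commanded motion moves the vehicle to a new position. This assumption is needed because $S_v$ as a function of the window is unbounded under continued displacement. When the threshold $\tau>0$ is exceeded, $S_v\ge\tau>0$. By the first part, this certifies gauge removal with a quantitative margin, since Proposition~\ref{prop:polarinfo} gives $\mathcal I_\psi\ge m\tau$.

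The only delicate point is the controller clause, which is more a design statement than a theorem. I will phrase it as: the test $S_v\ge\tau$ is checkable from known data and implies excitation. I will not prove convergence of an arbitrary controller. The rest is direct bookkeeping from the earlier results.
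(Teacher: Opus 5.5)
Your proposal is correct and follows the paper's proof. Both use the noiseless identity $S_v=\sum_k\|q_k-\bar q\|^2$ from Lemma~\ref{lem:spread} to get $S_v>0$ from two distinct positions, then invoke Corollary~\ref{cor:rank} for rank five; you obtain distinct local vectors directly from the invertibility of $R(\psi)$ rather than from $S_v>0$, which is equivalent. Your treatment of the controller clause as a checkable design test, with the added margin $\mathcal I_\psi\ge m\tau$ from Proposition~\ref{prop:polarinfo}, goes slightly beyond the paper, which does not argue that clause.
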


\begin{proof}
Lemma~\ref{lem:spread} gives $S_v=\sum_k\|q_k-\bar q\|^2$.
This quantity is zero if and only if all stored vehicle positions are identical.
If two positions differ, $S_v>0$, hence the local vehicle vectors cannot all be identical and Corollary~\ref{cor:rank} gives rank five.
\end{proof}

\begin{corollary}[Local weighted least-squares recovery]
\label{cor:gn}
Let $y=g(z^\star)$ be noiseless data satisfying Corollary~\ref{cor:rank}, and let $F=J_g(z^\star)^TWJ_g(z^\star)$.
Then $F\succ0$, $z^\star$ is a locally isolated minimizer of \eqref{eq:cost}, and damped Gauss--Newton initialized sufficiently near $z^\star$ converges locally to $z^\star$~\cite{nocedal2006numerical}.
For noisy measurements $y=g(z^\star)+\eta$ satisfying the same rank condition, the local weighted least-squares estimator obeys
\[
\hat z-z^\star=(J_g^TWJ_g)^{-1}J_g^TW\eta+O(\|\eta\|^2),
\]
with all Jacobians evaluated at $z^\star$.
If $W$ is the inverse packet-noise covariance, then $\mathrm{cov}(\hat z)\approx F^{-1}$ in the usual small-noise approximation~\cite{barshalom2001estimation}.
\end{corollary}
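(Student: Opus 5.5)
The plan is to reduce everything to the rank statement of Corollary~\ref{cor:rank} and then apply standard Gauss--Newton and implicit-function arguments. First I show $F\succ0$. By Corollary~\ref{cor:rank}, the stacked Cartesian matrix $\mathcal O_K$ has full column rank five at $z^\star$. The native Jacobian $J_g(z^\star)$ is $\mathcal O_K$ premultiplied by a block-diagonal polar Jacobian, and each block is invertible on the nonzero-range set, so $J_g(z^\star)$ also has full column rank. Since $W\succ0$ is diagonal with positive entries, $F=J_g^TWJ_g\succ0$. (In the multi-beacon case, use Corollary~\ref{cor:augmentation} in place of Corollary~\ref{cor:rank}; the argument is identical.)

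Next come isolation and local convergence. With noiseless data, $r(z^\star)=0$, so $J(z^\star)=0$ is the global minimum value. Wrapped angle differences are smooth near zero residual, so $J$ is $C^2$ near $z^\star$. Its Hessian is $\nabla^2J(z^\star)=F+\sum_i r_i(z^\star)\nabla^2 r_i=F$, because the residuals vanish. A zero gradient together with a positive-definite Hessian makes $z^\star$ a strict, hence locally isolated, minimizer. Because this is a zero-residual problem with a full-rank Jacobian, Gauss--Newton converges locally at a quadratic rate, and damping with a line search or Levenberg--Marquardt preserves this once the iterates are close. Here I cite the standard theorem in~\cite{nocedal2006numerical} rather than reprove it.

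For the noisy expansion, define $G(z,\eta)=J_g(z)^TW\,(g(z^\star)+\eta-g(z))$, which is (minus) the gradient of the cost with data $y=g(z^\star)+\eta$. It satisfies $G(z^\star,0)=0$ and $\partial_zG(z^\star,0)=-F$, which is invertible. The implicit function theorem then gives a smooth local stationary point $\hat z(\eta)$ with $\hat z(0)=z^\star$. This point is a local minimizer for small $\eta$, by continuity of the Hessian. Differentiating, $\partial_\eta\hat z=F^{-1}J_g^TW$, and a Taylor expansion gives the stated first-order formula with an $O(\|\eta\|^2)$ remainder. Finally, taking $W=\Sigma^{-1}$ with $\mathbb E\eta=0$ and $\mathrm{cov}(\eta)=W^{-1}$, the leading term has covariance $F^{-1}J_g^TWW^{-1}WJ_gF^{-1}=F^{-1}$.

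The main obstacle I expect is the angle wrapping. The residual must be smooth in a neighborhood, which needs the true bearings to stay away from the branch cut of the wrap, or equivalently a wrapped difference that is smooth near zero. It also needs the ranges $\|q_k-x\|$ and $\|p-x\|$ to stay bounded away from zero. I would handle this by restricting to a neighborhood and a small-noise regime in which all wrapped residuals lie in $(-\pi/2,\pi/2)$, where the wrap acts as the identity and the atan2 and norm maps are real-analytic. The remaining claims are standard.
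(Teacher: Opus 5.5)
Your proof is correct and follows the paper's skeleton. Full column rank of $J_g(z^\star)$ from Corollary~\ref{cor:rank} gives $F\succ0$. Linearizing the normal equations $J_g(z)^TW(y-g(z))=0$ at $(z^\star,0)$ gives $F(\hat z-z^\star)=J_g^TW\eta$, and hence $\mathrm{cov}(\hat z)\approx F^{-1}$.

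You differ from the paper in how you certify isolation and the existence of $\hat z$.

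\emph{Isolation.} The paper takes a nonsingular square minor of $J_g(z^\star)$ and applies the inverse function theorem to that output submap. This makes $g$ locally injective, so $z^\star$ is the unique exact-fit state nearby. Combined with $J\ge0$ and $W\succ0$, this is an identifiability-style argument that needs no second derivatives. On its own, though, it gives only strict minimality, not the absence of other nearby critical points. You use the second-order condition $\nabla^2J(z^\star)=F\succ0$, where the residual-weighted curvature of $g$ drops out at zero residual, and this gives both. Attribute isolation to the nonsingularity of $\nabla^2J(z^\star)$, via the inverse function theorem applied to $\nabla J$. Do not attribute it to strictness: ``strict, hence isolated'' is false in general.

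\emph{Existence of $\hat z$.} The paper presupposes a nearby minimizer and keeps first-order terms. Your implicit-function argument on $G(z,\eta)$ instead constructs the smooth branch $\hat z(\eta)$. It also shows this branch is a minimizer for small $\eta$ and justifies the $O(\|\eta\|^2)$ remainder.

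Your treatment of angle wrapping is the right idea, with one precision. Near a branch cut, the wrapped bearing residual equals the raw difference up to a locally constant multiple of $2\pi$; it is not literally the identity. That is all the derivative computations need.
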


\begin{proof}
Corollary~\ref{cor:rank} makes $J_g(z^\star)$ full column rank, so it contains a nonsingular square minor; applying the inverse function theorem to the corresponding output submap gives local injectivity of $g$ near $z^\star$, so $z^\star$ is a locally isolated zero-residual state and $F$ is nonsingular.
The first-order optimality condition for \eqref{eq:cost} is $J_g(\hat z)^TW(y-g(\hat z))=0$.
Expanding this equation at $(z^\star,\eta)=(z^\star,0)$ and retaining first-order terms gives
$F(\hat z-z^\star)=J_g^TW\eta$.
Because $F\succ0$, the expression follows by inversion; the covariance statement follows by substituting $\mathrm{cov}(\eta)=W^{-1}$.
\end{proof}

The constructive proof also provides the initializer used in the implementation: choose two distinct vehicle-relative observations, compute $\psi$, back-substitute $x$, average the transformed target packets for $p$, and refine with weighted Gauss--Newton.
Corollary~\ref{cor:gn} is local: it does not claim global performance, but it predicts the sensitivity trend seen in the trajectory sweep and explains why small $S_v$ produces poor numerical conditioning even after rank is achieved.
The perturbation argument follows the noisy-localizability methodology of~\cite{shames2013bearing,anderson2010formal}, transported to the unknown-frame relay model; the novelty is the motion-induced rank condition for relay self-calibration.

\section{Estimator and Target Seeking}

The implementation uses analytic Jacobians for the range-bearing residuals in \eqref{eq:cost}, Levenberg damping initialized at $10^{-3}$, and at most 80 iterations.
The robust baseline applies a Huber loss to the whitened residuals.
The two-view EKF uses the constructive seed after two informative views, giving a fair comparison against the batch estimator; the naive EKF instead starts from the uncalibrated prior and is reported separately as a gauge-sensitive initialization stress test.
For batch refinement, the first multistart candidate is always the constructive or nominal seed, and remaining candidates are perturbations used only to enlarge the attraction basin.
Candidates are ranked by the whitened cost in \eqref{eq:cost}, so multistart is not a separate estimator but a numerical safeguard against remote local minima after the local gauge has been removed.

The vehicle obeys $\dot q=u$ and uses
\begin{align}
u&=-k(q-\hat p)+u^{\rm exp}(t), \nonumber\\
u^{\rm exp}(t)&=Ae^{-\lambda(t-t_0)}
\begin{bmatrix}\cos\omega t\\ \sin\omega t\end{bmatrix}. \label{eq:controller}
\end{align}
The excitation epoch $t_0$ can be reset when $S_v$ or $\sigma_{\min}(F)$ remains small, or when the batch cost stalls.
Circular excitation is a simple way to satisfy Corollary~\ref{cor:excitation} before target seeking dominates.
The spread $S_v$ is therefore also a natural scalar objective for information-seeking motion, although this paper does not claim a new optimal excitation law.
If only finitely many resets occur, $\hat p(t)\to p$, and the excitation decays integrably after the final reset, then $q(t)\to p$ by input-to-state stability of $\dot q=-k(q-p)+k(\hat p-p)+u^{\rm exp}$~\cite{khalil2002nonlinear}.
The controller is deliberately simple: the contribution is the calibration geometry that tells the controller when motion is informative, not a new optimal-control law.
The closed-loop analysis in~\cite{bagla2027excitationsupervised} formalizes this informal reset rule into an explicit supervision algorithm with a proven convergence guarantee and an excitation-budget design rule, building on the calibration geometry and estimator analysis developed here.

\section{Validation}
\label{sec:validation}

All simulations use the same C++ estimator core.
Unless stated otherwise, Monte Carlo entries use 100 trials for the main summary, 50 trials for expanded stress rows, $K=80$ stored vehicle poses, $\sigma_r=0.03$ m, $\sigma_\theta=0.006$ rad, and 95\% confidence intervals.
Success means target error below $0.05$ m, beacon-position error below $0.05$ m, and yaw error below $0.05$ rad for full-state methods.
The ROS~2/Gazebo package links to the same core for software-in-the-loop reproduction of logged trials.
The validation is organized as theorem stress-testing: Table~\ref{tab:excitation} verifies the gauge/rank transition, Figure~\ref{fig:conditioning} tests $S_v$ as a conditioning predictor, Table~\ref{tab:methods} reports estimator consequences with fair two-view initialization, and Table~\ref{tab:stress} stresses the assumptions under weak motion, outliers, dropouts, and vehicle-pose error.
The complete validation reports the per-trajectory conditioning sweep (Table~\ref{tab:trajectories}), the Monte Carlo model comparison (Table~\ref{tab:mc}) with its per-trial error distributions (Fig.~\ref{fig:ecdf}), the full noise grid (Fig.~\ref{fig:noise}), and the initialization, outlier, degradation, and beacon-geometry sweeps (Tables~\ref{tab:poorinit} and~\ref{tab:outliers}--\ref{tab:geometry}).

\subsection{Excitation and Conditioning}

\begin{table}[t]
\centering
\caption{One-beacon excitation study, noise-free.}
\label{tab:excitation}
\scriptsize
\begin{tabular}{l c c c c c}
\hline
Case & Poses & Rank & $S_v$ & $\sigma_{\min}$ & Target (m)\\
\hline
Single pose & 1 & 4 & 0.00 & 0.00 & 1.8374\\
Two poses & 2 & 5 & 9.04 & 22.18 & 0.0000\\
Full trajectory & 80 & 5 & 314.21 & 219.12 & 0.0000\\
\hline
\end{tabular}
\end{table}

Table~\ref{tab:excitation} verifies Theorems~\ref{thm:gauge}--\ref{thm:constructive}: a single pose admits a zero-cost wrong target, while two distinct poses recover the state exactly.
In the trajectory sweep of Table~\ref{tab:trajectories}, stationary motion has $S_v=0$ and rank four; short-line and repeated-view trajectories reach rank five but have whitened-Jacobian condition numbers $\kappa=117.0$ and $130.3$; excited and circular trajectories have $\kappa$ near $10$.
Thus observability is binary, but estimator quality depends on the spread margin after the gauge is removed; Figure~\ref{fig:conditioning} plots this trend directly.
Condition numbers use one fixed state parameterization (meters, radians) and are compared across trajectories, not treated as coordinate-invariant.
The short-line and repeated-view cases also show why the constructive theorem should not be interpreted as saying that any nonzero motion is equally useful.

\begin{table}[t]
\centering
\caption{Trajectory conditioning sweep.}
\label{tab:trajectories}
\scriptsize
\setlength{\tabcolsep}{3.5pt}
\begin{tabular}{c l c c c c c}
\hline
\# & Trajectory & $S_v$ & $\sigma_{\min}$ & $\kappa$ & Target (m) & Success\\
\hline
-- & stationary & 0.0 & 0.00 & -- & 6.068 & 0.00\\
1 & short line & 3.6 & 18.94 & 117.0 & 0.0252 & 0.82\\
2 & repeated views & 4.3 & 17.01 & 130.3 & 0.0481 & 0.70\\
3 & low-curvature arc & 132.7 & 102.07 & 21.6 & 0.0077 & 1.00\\
4 & collinear pass & 214.4 & 295.03 & 12.2 & 0.0046 & 1.00\\
5 & excited figure-eight & 222.1 & 208.69 & 10.8 & 0.0059 & 1.00\\
6 & figure-eight & 252.8 & 229.24 & 9.8 & 0.0055 & 1.00\\
7 & line & 267.9 & 165.76 & 13.3 & 0.0059 & 1.00\\
8 & excited & 314.2 & 219.12 & 10.4 & 0.0052 & 1.00\\
9 & circle & 352.7 & 252.99 & 8.9 & 0.0048 & 1.00\\
\hline
\end{tabular}\\[3pt]
\parbox{\columnwidth}{\raggedright\scriptsize Note: marker numbers match Fig.~\ref{fig:conditioning}; 50 trials per row. $S_v$, $\sigma_{\min}$, and $\kappa$ are noise-free values at the true state; the stationary row is rank four, so $\kappa$ is unbounded there, and every other row is rank five.}
\end{table}

Table~\ref{tab:trajectories} lists the full sweep behind Figure~\ref{fig:conditioning}.
The boundary is sharp in spread rather than in rank: the two weakly excited trajectories with $S_v<5$ carry condition numbers above $100$ and success rates of $0.82$ and $0.70$, while every trajectory with $S_v>100$ has $\kappa\le21.6$ and full success.
The collinear pass is the instructive row: the vehicle positions are collinear, yet its spread is large, and Lemma~\ref{lem:spread} depends only on the centered spread of the stored poses, not on the shape of the path, so it is among the best-conditioned rows in the sweep.
Degeneracy in this problem is about how little the trajectory moves in the beacon frame, not about the curve it traces.

\begin{figure}[t]
\centering
\begin{tikzpicture}
\begin{axis}[
    width=\columnwidth, height=0.58\columnwidth,
    ylabel={$\log_{10}\kappa$},
    xmin=0.35, xmax=2.75, ymin=0.8, ymax=2.35,
    grid=major, grid style={black!10},
    tick label style={font=\footnotesize},
    label style={font=\footnotesize},
    legend columns=2,
    legend style={at={(0.5,1.03)}, anchor=south, draw=none, fill=none, font=\footnotesize, column sep=8pt},
]
\addplot[only marks, mark=*, mark size=1.8pt, figred] table[x=lsv, y=lkappa, restrict expr to domain={\thisrow{idx}}{1:2}] {conditioning_points.dat};
\addlegendentry{weakly excited ($S_v<5$)}
\addplot[only marks, mark=*, mark size=1.8pt, figblue] table[x=lsv, y=lkappa, restrict expr to domain={\thisrow{idx}}{3:9}] {conditioning_points.dat};
\addlegendentry{well excited ($S_v>100$)}
\node[anchor=north west, font=\footnotesize] at (rel axis cs:0.02,0.97) {(a)};
\node[anchor=north, font=\scriptsize, inner sep=2.5pt] at (axis cs:0.559,2.068) {1};
\node[anchor=south, font=\scriptsize, inner sep=2.5pt] at (axis cs:0.628,2.115) {2};
\node[anchor=south, font=\scriptsize, inner sep=2.5pt] at (axis cs:2.123,1.335) {3};
\node[anchor=east, font=\scriptsize, inner sep=2.5pt] at (axis cs:2.331,1.087) {4};
\node[anchor=north east, font=\scriptsize, inner sep=2pt] at (axis cs:2.347,1.035) {5};
\node[anchor=north, font=\scriptsize, inner sep=2.5pt] at (axis cs:2.403,0.991) {6};
\node[anchor=south, font=\scriptsize, inner sep=2.5pt] at (axis cs:2.428,1.125) {7};
\node[anchor=west, font=\scriptsize, inner sep=2.5pt] at (axis cs:2.497,1.017) {8};
\node[anchor=north west, font=\scriptsize, inner sep=2pt] at (axis cs:2.547,0.949) {9};
\end{axis}
\end{tikzpicture}\\[2pt]
\begin{tikzpicture}
\begin{axis}[
    width=\columnwidth, height=0.58\columnwidth,
    xlabel={$\log_{10}S_v$}, ylabel={RMSE (m)},
    xmin=0.35, xmax=2.75, ymin=0, ymax=0.054,
    ytick={0,0.01,0.02,0.03,0.04,0.05},
    scaled y ticks=false,
    yticklabel style={/pgf/number format/fixed},
    grid=major, grid style={black!10},
    tick label style={font=\footnotesize},
    label style={font=\footnotesize},
]
\addplot[only marks, mark=*, mark size=1.8pt, figred] table[x=lsv, y=rmse, restrict expr to domain={\thisrow{idx}}{1:2}] {conditioning_points.dat};
\addplot[only marks, mark=*, mark size=1.8pt, figblue] table[x=lsv, y=rmse, restrict expr to domain={\thisrow{idx}}{3:9}] {conditioning_points.dat};
\node[anchor=north west, font=\footnotesize] at (rel axis cs:0.02,0.97) {(b)};
\node[anchor=west, font=\scriptsize, inner sep=2.5pt] at (axis cs:0.559,0.02523) {1};
\node[anchor=west, font=\scriptsize, inner sep=2.5pt] at (axis cs:0.628,0.04807) {2};
\node[anchor=south, font=\scriptsize, inner sep=2.5pt] at (axis cs:2.123,0.00772) {3};
\node[anchor=north east, font=\scriptsize, inner sep=2pt] at (axis cs:2.331,0.00459) {4};
\node[anchor=south east, font=\scriptsize, inner sep=2pt] at (axis cs:2.347,0.00594) {5};
\node[anchor=north, font=\scriptsize, inner sep=2.5pt] at (axis cs:2.403,0.00551) {6};
\node[anchor=south, font=\scriptsize, inner sep=2.5pt] at (axis cs:2.428,0.00588) {7};
\node[anchor=south, font=\scriptsize, inner sep=2.5pt] at (axis cs:2.497,0.00517) {8};
\node[anchor=north west, font=\scriptsize, inner sep=2pt] at (axis cs:2.547,0.00482) {9};
\end{axis}
\end{tikzpicture}
\caption{Trajectory spread $S_v$ versus (a) conditioning and (b) target RMSE over 50 trials. Markers 1--9 are the trajectories numbered in Table~\ref{tab:trajectories}; the two weakly excited trajectories (red) sit an order of magnitude above the rest in both panels. Stationary motion ($S_v=0$) is omitted from the log axis.}
\label{fig:conditioning}
\end{figure}

\subsection{Estimator and Model Comparison}

\begin{table}[t]
\centering
\caption{Estimator comparison under Gaussian noise.}
\label{tab:methods}
\scriptsize
\setlength{\tabcolsep}{1.0pt}
\resizebox{\columnwidth}{!}{%
\begin{tabular}{@{}l c c c c c@{}}
\hline
Method/case & Target (m) & Beacon (m) & Yaw (rad) & Time (ms) & Success\\
\hline
Batch GN & $0.00549\!\pm\!0.00077$ & $0.00247\!\pm\!0.00038$ & $0.00072\!\pm\!0.00015$ & 2.03 & 1.00\\
Huber GN & $0.00503\!\pm\!0.00067$ & $0.00203\!\pm\!0.00036$ & $0.00072\!\pm\!0.00015$ & 5.41 & 1.00\\
Multistart GN & $0.00464\!\pm\!0.00053$ & $0.00233\!\pm\!0.00037$ & $0.00072\!\pm\!0.00015$ & 9.10 & 1.00\\
Sliding-window GN & $0.01925\!\pm\!0.00352$ & $0.04657\!\pm\!0.00905$ & $0.01049\!\pm\!0.00210$ & 0.22 & 0.58\\
Two-view EKF & $0.00970\!\pm\!0.00154$ & $0.00700\!\pm\!0.00091$ & $0.00262\!\pm\!0.00062$ & 3.29 & 1.00\\
Naive EKF & $0.07404\!\pm\!0.00206$ & $0.13992\!\pm\!0.00168$ & $0.04389\!\pm\!0.00085$ & 3.12 & 0.00\\
Single target packet & $0.03674\!\pm\!0.00474$ & $0.00253\!\pm\!0.00034$ & $0.00094\!\pm\!0.00016$ & 1.50 & 0.84\\
\hline
\end{tabular}
}\\[3pt]
\parbox{\columnwidth}{\raggedright\scriptsize Note: 50 trials per row, Gaussian noise, no outliers; $\pm$ entries are 95\% confidence intervals.}
\end{table}

In Table~\ref{tab:methods}, the target, beacon, and yaw columns report RMSE in meters, meters, and radians, and the time column is total wall-clock runtime per trial, including all candidate solves for multistart; the error statistics are seed-deterministic, while timings vary with the host machine.
The comparison separates estimation quality from initialization: the two-view EKF is viable after the constructive seed, while the naive EKF fails because the beacon pose remains poorly calibrated.
Sliding-window GN is fast but less reliable because short windows contain fewer distinct views and less spread.
Repeated target packets matter: a single target packet raises target RMSE from $0.00549$ m to $0.03674$ m in the $50$-trial comparison of Table~\ref{tab:methods}; in the separate $100$-trial Monte Carlo summary, adding a second beacon reduces batch target RMSE from $0.00563\pm0.00066$ m to $0.00366\pm0.00035$ m (Table~\ref{tab:mc}).

\begin{table}[t]
\centering
\caption{Monte Carlo model comparison.}
\label{tab:mc}
\scriptsize
\setlength{\tabcolsep}{1.0pt}
\resizebox{\columnwidth}{!}{%
\begin{tabular}{@{}l c c c c c@{}}
\hline
Model & Beacons & Target (m) & Beacon (m) & Yaw (rad) & Success\\
\hline
Local-frame model & 1 & $0.00563\!\pm\!0.00066$ & $0.00219\!\pm\!0.00024$ & $0.00083\!\pm\!0.00013$ & 1.00\\
Local-frame model & 2 & $0.00366\!\pm\!0.00035$ & $0.00248\!\pm\!0.00019$ & $0.00087\!\pm\!0.00008$ & 1.00\\
Calibrated baseline & 1 & $0.01026\!\pm\!0.00109$ & $0.00861\!\pm\!0.00096$ & -- & 1.00\\
Calibrated baseline & 2 & $0.00714\!\pm\!0.00072$ & $0.00642\!\pm\!0.00060$ & -- & 1.00\\
Naive EKF & 1 & $0.07377\!\pm\!0.00137$ & $0.14185\!\pm\!0.00128$ & $0.04416\!\pm\!0.00069$ & 0.00\\
Naive EKF & 2 & $0.03856\!\pm\!0.00160$ & $0.12206\!\pm\!0.00102$ & $0.02802\!\pm\!0.00046$ & 0.00\\
\hline
\end{tabular}
}\\[3pt]
\parbox{\columnwidth}{\raggedright\scriptsize Note: 100 trials per row. The calibrated baseline receives the target bearing directly in the global frame, eliminates the relay pose by back-substitution, and estimates the target alone; its beacon column back-projects positions from the calibrated packets, and it estimates no yaw.}
\end{table}

\begin{figure}[t]
\centering
\begin{tikzpicture}
\begin{axis}[
    width=\columnwidth, height=0.62\columnwidth,
    xlabel={final target error (m)}, ylabel={fraction of trials},
    xmin=0, xmax=0.014, ymin=0, ymax=1.02,
    scaled x ticks=false,
    xtick={0,0.004,0.008,0.012},
    xticklabel style={/pgf/number format/fixed, /pgf/number format/precision=3},
    grid=major, grid style={black!10},
    tick label style={font=\footnotesize},
    label style={font=\footnotesize},
    legend style={font=\footnotesize, at={(0.97,0.05)}, anchor=south east, draw=none, fill=none},
]
\addplot[const plot, thick, figblue] table[x=e1, y=p] {target_error_ecdf.dat};
\addlegendentry{one beacon}
\addplot[const plot, thick, figpurple, dashed] table[x=e2, y=p] {target_error_ecdf.dat};
\addlegendentry{two beacons}
\end{axis}
\end{tikzpicture}
\caption{Empirical CDF of per-trial final target error in the 100-trial Monte Carlo batch of Table~\ref{tab:mc}, for one and two beacons.}
\label{fig:ecdf}
\end{figure}

Table~\ref{tab:mc} compares the local-frame model against both baselines over 100 trials.
Two observations matter.
First, adding a second relay improves target accuracy by about a factor of $1.5$, consistent with Corollary~\ref{cor:augmentation}: the second beacon contributes an independent, self-calibrated view of the same target.
Fig.~\ref{fig:ecdf} shows the per-trial distributions behind the first two rows: the second beacon improves the whole distribution rather than the mean alone, and compresses the tail, with a worst trial of $0.0086$ m against $0.0133$ m for one beacon.
Second, the local-frame model outperforms the calibrated global-frame baseline even though the baseline is handed the relay calibration for free: after eliminating the relay pose, the baseline condenses each packet into a single range-consistency residual, while the local-frame estimator fuses all four whitened residual components per packet through the jointly estimated frame.
Self-calibration is therefore not a tax paid for realism; with enough excitation, the joint estimator recovers packet information that the calibrated reduction discards.
The naive-EKF rows repeat the initialization failure of Table~\ref{tab:methods} at 100 trials: without the constructive two-view seed, the filter linearizes about an uncalibrated relay pose and never recovers, even with a second beacon.

\subsection{Noise Robustness}

\begin{figure}[t]
\centering
\begin{tikzpicture}
\begin{axis}[
    width=\columnwidth, height=0.56\columnwidth,
    ylabel={target RMSE (m)},
    xmin=-0.003, xmax=0.093, ymin=0, ymax=0.022,
    scaled x ticks=false, scaled y ticks=false,
    xtick={0,0.03,0.06,0.09},
    xticklabel style={/pgf/number format/fixed, /pgf/number format/precision=2},
    ytick={0,0.005,0.01,0.015,0.02},
    yticklabel style={/pgf/number format/fixed, /pgf/number format/precision=3},
    grid=major, grid style={black!10},
    tick label style={font=\footnotesize},
    label style={font=\footnotesize},
    legend style={font=\scriptsize, at={(0.5,1.03)}, anchor=south, draw=none, fill=none, column sep=4pt},
    legend columns=3,
]
\addplot[figblue!35, mark=*, mark size=1.4pt] table[x=sr, y=t00] {noise_robustness_curves.dat};
\addlegendentry{$\sigma_\theta{=}0$}
\addplot[figblue!60, mark=square*, mark size=1.4pt] table[x=sr, y=t02] {noise_robustness_curves.dat};
\addlegendentry{$0.002$}
\addplot[figblue!85, mark=triangle*, mark size=1.6pt] table[x=sr, y=t06] {noise_robustness_curves.dat};
\addlegendentry{$0.006$}
\addplot[figblue, mark=diamond*, mark size=1.6pt] table[x=sr, y=t12] {noise_robustness_curves.dat};
\addlegendentry{$0.012$}
\addplot[figblue!60!black, mark=o, mark size=1.4pt] table[x=sr, y=t18] {noise_robustness_curves.dat};
\addlegendentry{$0.018$~rad}
\node[anchor=north west, font=\footnotesize] at (rel axis cs:0.02,0.97) {(a)};
\end{axis}
\end{tikzpicture}\\[2pt]
\begin{tikzpicture}
\begin{axis}[
    width=\columnwidth, height=0.56\columnwidth,
    xlabel={$\sigma_r$ (m)}, ylabel={yaw RMSE (rad)},
    xmin=-0.003, xmax=0.093, ymin=0, ymax=0.0031,
    scaled x ticks=false, scaled y ticks=false,
    xtick={0,0.03,0.06,0.09},
    xticklabel style={/pgf/number format/fixed, /pgf/number format/precision=2},
    ytick={0,0.001,0.002,0.003},
    yticklabel style={/pgf/number format/fixed, /pgf/number format/precision=3},
    grid=major, grid style={black!10},
    tick label style={font=\footnotesize},
    label style={font=\footnotesize},
]
\addplot[figblue!35, mark=*, mark size=1.4pt] table[x=sr, y=y00] {noise_robustness_curves.dat};
\addplot[figblue!60, mark=square*, mark size=1.4pt] table[x=sr, y=y02] {noise_robustness_curves.dat};
\addplot[figblue!85, mark=triangle*, mark size=1.6pt] table[x=sr, y=y06] {noise_robustness_curves.dat};
\addplot[figblue, mark=diamond*, mark size=1.6pt] table[x=sr, y=y12] {noise_robustness_curves.dat};
\addplot[figblue!60!black, mark=o, mark size=1.4pt] table[x=sr, y=y18] {noise_robustness_curves.dat};
\node[anchor=north west, font=\footnotesize] at (rel axis cs:0.02,0.97) {(b)};
\end{axis}
\end{tikzpicture}
\caption{One-beacon noise sweep: (a) target RMSE and (b) relay-yaw RMSE versus range noise $\sigma_r$, one curve per bearing noise $\sigma_\theta$, shaded light to dark with increasing $\sigma_\theta$ (50 trials per cell).}
\label{fig:noise}
\end{figure}

At $\sigma_\theta=0.006$ rad, the one-beacon noise sweep gives target RMSE $0.00843$ m for $\sigma_r=0.06$ m and $0.01091$ m for $\sigma_r=0.09$ m.
Fig.~\ref{fig:noise} reports the full grid behind these two values: target error degrades smoothly in both noise channels, the success rate is $1.00$ over the entire grid, and bearing noise is the dominant channel at the nominal $\sigma_r\le0.03$ m operating point.
Panel (b) isolates the self-calibration variable: the relay-yaw error is governed by $\sigma_\theta$, with only weak dependence on $\sigma_r$, matching the whitened residual structure in which yaw enters through the bearing rows.
There is no noise threshold behavior anywhere on the grid, consistent with Corollary~\ref{cor:gn}, which predicts first-order error growth once the excited trajectory has removed the gauge with margin.

\subsection{Initialization Robustness}

\begin{table}[t]
\centering
\caption{Initialization sweep.}
\label{tab:poorinit}
\scriptsize
\setlength{\tabcolsep}{3.0pt}
\begin{tabular}{l c c c c c}
\hline
Case & Offset (m) & Radius (m) & Yaw (rad) & Starts & Target (m)\\
\hline
Nominal seed & 0 & 2.0 & 0 & 1 & 0.00513\\
Target seed 1 m & 1 & 2.0 & 0 & 1 & 0.00498\\
Target seed 2 m & 2 & 2.0 & 0 & 1 & 0.00516\\
Poor beacon radius & 1 & 0.7 & 0 & 1 & 0.00493\\
Poor yaw seed & 1 & 2.0 & 1.57 & 1 & 0.00497\\
Poor seed, multistart & 2 & 2.8 & 2.4 & 6 & 0.00524\\
\hline
\end{tabular}\\[3pt]
\parbox{\columnwidth}{\raggedright\scriptsize Note: 50 trials per row; success is $1.00$ in every row. Offset is the target-seed offset from the true target; radius and yaw give the beacon seed circle and yaw guess; starts counts multistart candidates.}
\end{table}

A poor-initialization sweep with $2$ m target offset, $2.8$ m beacon seed radius, $2.4$ rad yaw seed, and six starts gives multistart GN target RMSE $0.00524$ m and success $1.00$.
Table~\ref{tab:poorinit} expands this sweep: every seeding corruption up to a $2$ m target offset, a wrong beacon-seed circle, and a $2.4$ rad yaw guess converges to the same $\approx5$ mm accuracy, and all but the last row do so from a single start.
This is the numerical counterpart of Corollary~\ref{cor:gn}: once excitation removes the gauge, the attraction basin around the isolated minimizer is wide enough that crude seeds converge, and multistart acts as a safeguard rather than a necessity.

\subsection{Outliers, Dropout, and Vehicle-Pose Error}

\begin{table}[t]
\centering
\caption{One-beacon robustness and failure modes.}
\label{tab:stress}
\scriptsize
\begin{tabular}{l c c c}
\hline
Case & Setting & Target (m) & Success\\
\hline
Stationary & rank 4, $S_v=0$ & 6.068 & 0.00\\
Short line & $\kappa=117.0$ & 0.025 & 0.82\\
Repeated views & $\kappa=130.3$ & 0.048 & 0.70\\
60\% dropout & 30.7 packets & 0.008 & 1.00\\
10\% outliers, GN & $0.75$ m, $0.35$ rad & 0.097 & 0.10\\
10\% outliers, Huber & same corruptions & 0.0065 & 1.00\\
i.i.d.\ $q_k$ noise & $\sigma_q=0.10$ m & 0.064 & 0.44\\
random-walk drift & $0.5\%$ step drift & 0.0096 & 1.00\\
\hline
\end{tabular}
\end{table}

Table~\ref{tab:stress} locates the practical boundary: weak spread is ill conditioned, diverse packets tolerate dropout, Huber weighting rejects heavy-tailed packets, and vehicle-pose error enters beacon and target recovery directly.
The outlier rows corrupt 10\% of packets by $0.75$ m and $0.35$ rad; the drift row uses $e_k=e_{k-1}+\nu_k$, $\nu_k\sim\mathcal N(0,\sigma_k^2I)$, with $\sigma_k=0.005\|q_k-q_{k-1}\|$.
Drifting odometry, asynchronous packets, and moving targets require additional states and gauge-consistent estimation.

\begin{table}[t]
\centering
\caption{Outlier robustness sweep.}
\label{tab:outliers}
\scriptsize
\setlength{\tabcolsep}{4.0pt}
\begin{tabular}{c c c c c}
\hline
& \multicolumn{2}{c}{Batch GN} & \multicolumn{2}{c}{Huber GN}\\
Outliers & Target (m) & Success & Target (m) & Success\\
\hline
0\% & 0.00626 & 1.00 & 0.00633 & 1.00\\
5\% & 0.06576 & 0.46 & 0.00513 & 1.00\\
10\% & 0.09676 & 0.10 & 0.00654 & 1.00\\
20\% & 0.13107 & 0.08 & 0.00768 & 1.00\\
\hline
\end{tabular}\\[3pt]
\parbox{\columnwidth}{\raggedright\scriptsize Note: 50 trials per row; corrupted packets are shifted by $0.75$ m in range and $0.35$ rad in bearing.}
\end{table}

\begin{table}[t]
\centering
\caption{Dropout and vehicle-pose error sweeps.}
\label{tab:degradation}
\scriptsize
\setlength{\tabcolsep}{4.0pt}
\begin{tabular}{l c c c}
\hline
Case & Target (m) & Beacon (m) & Success\\
\hline
Dropout 0\% (80.0 pkts) & 0.00541 & 0.00231 & 1.00\\
Dropout 15\% (67.8 pkts) & 0.00646 & 0.00254 & 1.00\\
Dropout 30\% (55.0 pkts) & 0.00637 & 0.00291 & 1.00\\
Dropout 45\% (44.4 pkts) & 0.00678 & 0.00310 & 1.00\\
Dropout 60\% (30.7 pkts) & 0.00815 & 0.00344 & 1.00\\
\hline
$\sigma_q=0$ & 0.00598 & 0.00223 & 1.00\\
$\sigma_q=0.10$ m & 0.06434 & 0.05755 & 0.44\\
$\sigma_q=0.20$ m & 0.16835 & 0.16213 & 0.02\\
$\sigma_q=0.30$ m & 0.30748 & 0.28355 & 0.00\\
$\sigma_q=0.40$ m & 0.47450 & 0.43509 & 0.00\\
Drift $0.5\%$ & 0.00963 & 0.00841 & 1.00\\
\hline
\end{tabular}\\[3pt]
\parbox{\columnwidth}{\raggedright\scriptsize Note: 50 trials per row; each block is a separate sweep with its own zero-degradation reference row. Dropout rows give the mean delivered packets out of 80; $\sigma_q$ rows add i.i.d.\ noise to the stored vehicle poses; the drift row uses the random-walk model above.}
\end{table}

Tables~\ref{tab:outliers} and~\ref{tab:degradation} expand the corresponding rows of Table~\ref{tab:stress}.
The outlier sweep shows the unprotected failure is graded: batch GN success decays from $0.46$ to $0.08$ as the corruption rate grows, while Huber weighting holds millimeter accuracy and full success through $20\%$ corruption, at essentially no cost on clean data (first row).
The degradation sweeps separate the two informational assumptions of model \eqref{eq:model}.
Dropout thins the window but leaves the anchor poses exact, and accuracy degrades gently, from $0.0054$ m to $0.0082$ m at $60\%$ dropout, because the surviving packets still span a wide spread.
Vehicle-pose noise instead corrupts the anchors themselves and is the binding assumption: $\sigma_q=0.10$ m already halves the success rate, and the beacon and target errors grow together because anchor error propagates through the recovered frame, as the sensitivity expression of Corollary~\ref{cor:gn} predicts.
The random-walk drift row stays benign because its accumulated error remains at the few-millimeter level over the window.

\subsection{Interpretation Across Stress Regimes}

The experiments separate three questions that a rank test alone cannot answer.
First, structural recoverability asks whether the gauge is removed: the stationary row fails exactly, while every nonstationary row is rank five.
Second, numerical recoverability asks whether the remaining information is large enough relative to noise: the short-line and repeated-view trajectories are identifiable but unreliable, and their small $S_v$ values, large condition numbers, and reduced success rates move together as Lemma~\ref{lem:spread} and Proposition~\ref{prop:polarinfo} predict.
Third, statistical robustness asks whether the assumed packets and anchor poses are trustworthy after geometry is adequate.
Dropout mainly reduces sample count while preserving a wide baseline, whereas outliers violate the Gaussian packet model and vehicle-pose error perturbs the global anchors that define the recovered frame.

This separation also clarifies what additional sensors buy.
Repeated target packets reduce the averaging component of target uncertainty but add no calibration rank; a second relay contributes an independent calibrated view after the anchor relay fixes the target; and neither mechanism repairs a weak anchor trajectory by itself.
The full results therefore support a layered design rule: first enforce $S_v>0$ to remove the gauge, then choose a spread margin using the native-noise information bound, and finally select robust losses or joint vehicle-pose states according to the anticipated non-Gaussian and navigation errors.

\subsection{Two-Beacon Augmentation Geometry}

\begin{table}[t]
\centering
\caption{Two-beacon separation sweep.}
\label{tab:geometry}
\scriptsize
\setlength{\tabcolsep}{4.0pt}
\begin{tabular}{c c c c}
\hline
Separation (m) & Target (m) & Beacon (m) & Yaw (rad)\\
\hline
0.3 & 0.00311 & 0.00214 & 0.00073\\
0.6 & 0.00298 & 0.00211 & 0.00073\\
1.0 & 0.00296 & 0.00202 & 0.00075\\
1.6 & 0.00283 & 0.00215 & 0.00074\\
2.4 & 0.00290 & 0.00195 & 0.00079\\
3.2 & 0.00295 & 0.00177 & 0.00073\\
4.0 & 0.00349 & 0.00244 & 0.00086\\
\hline
\end{tabular}\\[3pt]
\parbox{\columnwidth}{\raggedright\scriptsize Note: 50 trials per row on the excited trajectory; success is $1.00$ in every row.}
\end{table}

Corollary~\ref{cor:augmentation} says a second beacon needs only one noncoincident packet to join the calibrated solution, and the auxiliary-beacon margin remark places its sensitivity in the local target/vehicle separation, not in the inter-beacon distance.
Table~\ref{tab:geometry} confirms this: sweeping the two-beacon separation across more than a decade leaves target RMSE within $0.0028$--$0.0035$ m with full success everywhere.
There is no preferred baseline between the relays because each relay is calibrated by the same vehicle trajectory, not by the other relay.

\section{Degenerate Configurations and Equivariance}
\label{sec:scope}

The constructive result is global within its stated planar model, but its assumptions identify several distinct failure mechanisms that are useful when applying it.
First, one stored pose, or any number of repeated identical poses, gives $S_v=0$ and retains the exact one-parameter gauge of Theorem~\ref{thm:gauge}.
Second, $S_v>0$ removes that gauge, but arbitrarily small spread remains arbitrarily ill conditioned by Lemma~\ref{lem:spread} and Proposition~\ref{prop:polarinfo}; identifiability is therefore not an accuracy guarantee.
Third, a zero vehicle--beacon or target--beacon range makes the corresponding bearing undefined and lies outside the native polar model.
Fourth, vehicle packets alone can calibrate $(x,\psi)$ after two views but cannot recover $p$ without at least one associated target packet; repeated measurements of the same static target vector improve averaging but do not add rank.
Fifth, yaw is unique only modulo $2\pi$ because bearings are wrapped; a reflected solution is excluded because the relay transform is restricted to $SO(2)$ rather than $O(2)$.
For several relays, one excited anchor fixes the shared target, after which Corollary~\ref{cor:augmentation}'s noncoincident packet condition must hold separately for every additional relay.

The model is exactly equivariant to a common global translation.
For any $c\in\mathbb R^2$, replace $(q_k,x,p)$ by $(q_k+c,x+c,p+c)$ while leaving $\psi$ unchanged.
Every local vector in \eqref{eq:model}, every native range-bearing prediction, $S_v$, and the weighted residual cost remain identical.
Accordingly, the estimator errors and Fisher information are unchanged under a change of global origin; this algebraic invariance should not be confused with moving only the vehicle trajectory relative to fixed target and relay positions, which changes the ranges and therefore the native angular-noise weighting in Proposition~\ref{prop:polarinfo}.

\section{Statistical Protocol and Reproducibility}
\label{sec:reproducibility}

All reported estimators use the same native range-bearing residual implementation and analytic Jacobian.
The constructive initializer searches the stored window for the pair with largest local-vector separation, computes $(\hat\psi,\hat x,\hat p)$ by Theorem~\ref{thm:constructive}, and only then permits nonlinear refinement.
The nominal Monte Carlo study uses deterministic seeds, 80 known vehicle poses, 100 trials per primary row, and 50 trials per expanded stress row; confidence intervals are computed from trial-level errors, while success is evaluated independently using the fixed $0.05$-m target, $0.05$-m relay-position, and $0.05$-rad yaw criteria.
Outlier comparisons share noise seeds and corrupted packet indices across vanilla and Huber estimators, and paired model comparisons share the underlying world and trajectory.
This matched design makes differences between estimators paired outcomes rather than comparisons between unrelated sample populations.
Geometric quantities ($S_v$, rank, $\sigma_{\min}$, and $\kappa$) are evaluated noise-free at the true state, while accuracy and success are computed from noisy trial-level estimates; the separation prevents random solver behavior from being mistaken for a change in observability.
Solver termination is recorded independently of the accuracy thresholds, so numerical stopping is never counted as successful localization by itself.
The accompanying artifact contains the estimator implementation, deterministic experiment definitions, trial-level outputs, and figure-generation workflow needed to reproduce the reported statistics.

\section{Conclusion}

This paper established that vehicle motion can self-calibrate an unknown-pose range-bearing relay and make its hidden-target packet globally actionable.
The key point is a minimal-excitation characterization, not only a rank condition: two distinct local vehicle observations make one unknown-pose relay sufficient, via a constructive formula for its yaw, position, and the anchored target.
The centered spread $S_v$ links identifiability to estimator conditioning and excitation design, while simulations confirm exact noiseless recovery, accurate noisy recovery with confidence intervals, two-view EKF performance, and predictable degradation under weak motion, outliers, single target packets, and vehicle-pose error.
Future work will address joint vehicle-pose estimation, asynchronous packets, and physical robot validation, complementing the closed-loop excitation-supervised control of~\cite{bagla2027excitationsupervised}.

\bibliographystyle{IEEEtran}
\bibliography{references}

\begin{thebibliography}{10}
\providecommand{\url}[1]{#1}
\csname url@samestyle\endcsname
\providecommand{\newblock}{\relax}
\providecommand{\bibinfo}[2]{#2}
\providecommand{\BIBentrySTDinterwordspacing}{\spaceskip=0pt\relax}
\providecommand{\BIBentryALTinterwordstretchfactor}{4}
\providecommand{\BIBentryALTinterwordspacing}{\spaceskip=\fontdimen2\font plus
\BIBentryALTinterwordstretchfactor\fontdimen3\font minus
  \fontdimen4\font\relax}
\providecommand{\BIBforeignlanguage}[2]{{%
\expandafter\ifx\csname l@#1\endcsname\relax
\typeout{** WARNING: IEEEtran.bst: No hyphenation pattern has been}%
\typeout{** loaded for the language `#1'. Using the pattern for}%
\typeout{** the default language instead.}%
\else
\language=\csname l@#1\endcsname
\fi
#2}}
\providecommand{\BIBdecl}{\relax}
\BIBdecl

\bibitem{guler2017adaptive}
S.~G{\"u}ler, B.~Fidan, S.~Dasgupta, B.~D.~O. Anderson, and I.~Shames,
  ``Adaptive source localization based station keeping of autonomous
  vehicles,'' \emph{IEEE Trans. Autom. Control}, vol.~62, no.~7, pp.
  3122--3135, 2017.

\bibitem{roumeliotis2002distributed}
S.~I. Roumeliotis and G.~A. Bekey, ``Distributed multirobot localization,''
  \emph{IEEE Trans. Robot. Autom.}, vol.~18, no.~5, pp. 781--795, 2002.

\bibitem{durrantwhyte2006slam}
H.~Durrant-Whyte and T.~Bailey, ``Simultaneous localization and mapping: {Part
  I},'' \emph{IEEE Robot. Autom. Mag.}, vol.~13, no.~2, pp. 99--110, 2006.

\bibitem{cadena2016slam}
C.~Cadena, L.~Carlone, H.~Carrillo, Y.~Latif, D.~Scaramuzza, J.~Neira, I.~Reid,
  and J.~J. Leonard, ``Past, present, and future of simultaneous localization
  and mapping: Toward the robust-perception age,'' \emph{IEEE Trans. Robot.},
  vol.~32, no.~6, pp. 1309--1332, 2016.

\bibitem{huang2010observability}
G.~P. Huang, A.~I. Mourikis, and S.~I. Roumeliotis, ``Observability-based rules
  for designing consistent {EKF} {SLAM} estimators,'' \emph{Int. J. Robot.
  Res.}, vol.~29, no.~5, pp. 502--528, 2010.

\bibitem{barooah2007estimation}
P.~Barooah and J.~P. Hespanha, ``Estimation on graphs from relative
  measurements,'' \emph{IEEE Control Syst. Mag.}, vol.~27, no.~4, pp. 57--74,
  2007.

\bibitem{eren2004rigidity}
T.~Eren, D.~K. Goldenberg, W.~Whiteley, Y.~R. Yang, A.~S. Morse, B.~D.~O.
  Anderson, and P.~N. Belhumeur, ``Rigidity, computation, and randomization in
  network localization,'' in \emph{IEEE INFOCOM}, vol.~4, 2004, pp. 2673--2684.

\bibitem{arun1987least}
K.~S. Arun, T.~S. Huang, and S.~D. Blostein, ``Least-squares fitting of two
  {3-D} point sets,'' \emph{IEEE Trans. Pattern Anal. Mach. Intell.}, vol.
  PAMI-9, no.~5, pp. 698--700, 1987.

\bibitem{horn1987closed}
B.~K.~P. Horn, ``Closed-form solution of absolute orientation using unit
  quaternions,'' \emph{J. Opt. Soc. Am. A}, vol.~4, no.~4, pp. 629--642, 1987.

\bibitem{shames2013bearing}
I.~Shames, A.~N. Bishop, and B.~D.~O. Anderson, ``Analysis of noisy
  bearing-only network localization,'' \emph{IEEE Trans. Autom. Control},
  vol.~58, no.~1, pp. 247--252, 2013.

\bibitem{piovan2013frame}
G.~Piovan, I.~Shames, B.~Fidan, F.~Bullo, and B.~D.~O. Anderson, ``On frame and
  orientation localization for relative sensing networks,'' \emph{Automatica},
  vol.~49, no.~1, pp. 206--213, 2013.

\bibitem{dou2020target}
L.~Dou, C.~Song, X.~Wang, L.~Liu, and G.~Feng, ``Target localization and
  enclosing control for networked mobile agents with bearing measurements,''
  \emph{Automatica}, vol. 118, p. 109022, 2020.

\bibitem{han2016mobileanchor}
G.~Han, J.~Jiang, C.~Zhang, T.~Q. Duong, M.~Guizani, and G.~K. Karagiannidis,
  ``A survey on mobile anchor node assisted localization in wireless sensor
  networks,'' \emph{IEEE Commun. Surveys Tuts.}, vol.~18, no.~3, pp.
  2220--2243, 2016.

\bibitem{jia2025target}
T.~Jia, X.~Ke, H.~Liu, K.~C. Ho, and H.~Su, ``Target localization and sensor
  self-calibration of position and synchronization by range and angle
  measurements,'' \emph{IEEE Trans. Signal Process.}, vol.~73, pp. 340--355,
  2025.

\bibitem{tsai1989new}
R.~Y. Tsai and R.~K. Lenz, ``A new technique for fully autonomous and efficient
  {3D} robotics hand/eye calibration,'' \emph{IEEE Trans. Robot. Autom.},
  vol.~5, no.~3, pp. 345--358, 1989.

\bibitem{bagla2019receding}
Y.~Bagla and V.~Srivastava, ``On receding horizon chance constraint motion
  planning for uncertain multi-agent systems,'' in \emph{Proc. ASME Dyn. Syst.
  Control Conf.}, vol. 59162, 2019, p. V003T19A012.

\bibitem{johnson2024impact}
C.~Johnson, Y.~Bagla, K.~Borle, A.~Katpatal, M.~Mahmood, and E.~B. Olson,
  ``Method and system for impact-based operation of an autonomous agent,'' U.S.
  Patent 12\,012\,123, Jun., 2024.

\bibitem{leny2018localizability}
J.~Le~Ny and S.~Chauvi{\`e}re, ``Localizability-constrained deployment of
  mobile robotic networks with noisy range measurements,'' \emph{IEEE Trans.
  Robot.}, vol.~34, no.~3, pp. 705--721, 2018.

\bibitem{nocedal2006numerical}
J.~Nocedal and S.~J. Wright, \emph{Numerical Optimization}, 2nd~ed.\hskip 1em
  plus 0.5em minus 0.4em\relax Springer, 2006.

\bibitem{barshalom2001estimation}
Y.~Bar-Shalom, X.~R. Li, and T.~Kirubarajan, \emph{Estimation with Applications
  to Tracking and Navigation}.\hskip 1em plus 0.5em minus 0.4em\relax Wiley,
  2001.

\bibitem{anderson2010formal}
B.~D.~O. Anderson, I.~Shames, G.~Mao, and B.~Fidan, ``Formal theory of noisy
  sensor network localization,'' \emph{SIAM J. Discrete Math.}, vol.~24, no.~2,
  pp. 684--698, 2010.

\bibitem{khalil2002nonlinear}
H.~K. Khalil, \emph{Nonlinear Systems}, 3rd~ed.\hskip 1em plus 0.5em minus
  0.4em\relax Prentice Hall, 2002.

\bibitem{bagla2027excitationsupervised}
Y.~Bagla, ``Excitation-supervised closed-loop self-calibration and target
  seeking for an unknown-pose range-bearing relay,'' 2026, arXiv:2608.12528.

\end{thebibliography}

\end{document}